\def\comment#1{}
\newcommand{\Ac}{\mathcal{A}}
\newcommand{\Bc}{\mathcal{B}}
\newcommand{\Ec}{\mathcal{E}}
\newcommand{\Fc}{\mathcal{F}}
\newcommand{\Nc}{\mathcal{N}}
\newcommand{\Qc}{\mathcal{Q}}
\newcommand{\Uc}{\mathcal{U}}
\newcommand{\ex}{{\rm e}}
\newcommand{\ev}{{\bf e}}
\newcommand{\xv}{{\bf x}}
\newcommand{\yv}{{\bf y}}
\newcommand{\zv}{{\bf z}}
\newcommand{\uv}{{\bf u}}
\newcommand{\vv}{{\bf v}}
\newcommand{\sv}{{\bf s}}
\def\a{\alpha}
\let\P\relax
\DeclareMathOperator\P{P}
\newcommand\ie{i.e.,\xspace}
\def\textiid{i.i.d.\@\xspace}
\newcommand\iid{\ifmmode\text{ i.i.d. } \else \textiid \fi}
\newcommand{\argmin}{\mathrm{argmin}}
\newcommand{\xvh}{\hat{\bf x}}
\newcommand{\uvh}{\hat{{\bf u}}}
\newcommand{\xvt}{\tilde{{\bf x}}}
\newcommand{\uvt}{\tilde{{\bf u}}}
\newcommand{\mvec}[1]{{\bf #1}}
\begin{document}

\title{Solving  inverse problems via auto-encoders}

\author{
Pei Peng\thanks{P. Peng is a Ph.D. student at  Department of Electrical \& Computer Engineering,  Rutgers University, \texttt{pp566@scarletmail.rutgers.edu}}, 
Shirin Jalali\thanks{S. Jalali is with Nokia Bell Labs, 600 Mountain Avenue, Murray Hill, NJ, 07974, USA, \texttt{shirin.jalali@nokia-bell-labs.com}}
and 
Xin Yuan\thanks{X. Yuan is with Nokia Bell Labs, 600 Mountain Avenue, Murray Hill, NJ, 07974, USA, \texttt{xyuan@bell-labs.com} }
\footnote{This paper was  presented in part at 2019 IEEE International Symposium on Information Theory, Paris, France \cite{jalali2019solving} and will be presented in part at NeurIPS 2019 Workshop on ``Solving Inverse Problems with Deep Networks''.}
}

\maketitle

\doublespacing

\theoremstyle{plain}\newtheorem{lemma}{\textbf{Lemma}}\newtheorem{theorem}{\textbf{Theorem}}\newtheorem{corollary}{\textbf{Corollary}}\newtheorem{assumption}{\textbf{Assumption}}\newtheorem{example}{\textbf{Example}}\newtheorem{definition}{\textbf{Definition}}

\theoremstyle{definition}

\theoremstyle{remark}\newtheorem{remark}{\textbf{Remark}}

\begin{abstract}

Compressed sensing (CS) is about recovering a structured  signal from its under-determined linear measurements. Starting from sparsity, recovery methods have steadily moved towards  more complex structures.  Emerging machine learning tools such as  generative functions that are based on neural networks are able to learn general complex structures from training data. This makes  them potentially powerful tools for designing CS algorithms. Consider a desired class of signals $\cal Q$,  ${\cal Q}\subset\mathds{R}^n$, and  a corresponding   generative function $g:{\cal U}^k\to\mathds{R}^n$,  ${\cal U}\subset \mathds{R}$, such that $\sup_{{\bf x}\in {\cal Q}}\min_{{\bf u}\in\Uc^k}{1\over \sqrt{n}}\|g({\bf u})-{\bf x}\|\leq \delta$.  A recovery method based on $g$ seeks $g({\bf u})$ with minimum measurement error. In this paper,  the performance of such a recovery method is studied, under both noisy and noiseless measurements. In the noiseless case, roughly speaking,  it is proven that, as $k$ and $n$ grow without bound and $\delta$ converges to zero, if the number of measurements ($m$) is larger than  the input dimension of the generative model ($k$), then  asymptotically, almost lossless recovery is possible. Furthermore,  the performance of an efficient  iterative algorithm  based on projected gradient descent is studied. In this case,  an auto-encoder is used to define and enforce the source structure at the projection step. The auto-encoder is defined by encoder and decoder (generative)  functions  $f:\mathds{R}^n\to{\cal U}^k$ and  $g:{\cal U}^k\to\mathds{R}^n$, respectively.  We theoretically prove that,  roughly, given $m>40k\log{1\over \delta}$ measurements, such an algorithm converges to the vicinity of the desired result, even in the presence of additive white Gaussian noise. Numerical results exploring  the effectiveness of the proposed method are presented. 

\end{abstract}
\begin{IEEEkeywords}
Compressed sensing, generative models, inverse problems, auto-encoders, deep learning.
\end{IEEEkeywords}

\section{Introduction}

\subsection{Problem statement}

Solving inverse problems is at the core of many data acquisition systems, such as magnetic resonance imaging (MRI)  and optical coherence tomography~\cite{Huang91Science}. In many of such systems, through proper quantization in time or space,  the measurement system can be modeled as a system of linear equations as follows. The unknown signal to be measured is a high-dimensional signal $\xv\in \Qc$, where $\Qc$ represents  a compact subset of $\mathds{R}^n$. The measured signal can be represented  as  $\yv=A\xv+\zv$. Here $A\in\mathds{R}^{m\times n}$, $\yv\in\mathds{R}^m$ and  $\zv\in\mathds{R}^m$ denote the sensing matrix, the measurement vector, and the measurement noise, respectively. Typically the main goal is  to design an efficient   algorithm that recovers $\xv$ from the measurements $\yv$.   In addition to computational complexity, the efficiency of such an algorithm is measured in terms of its required number of measurements, its reconstruction quality, and its robustness to noise.  While  classic recovery methods were designed assuming that   $m$ is larger than $n$, i.e., the number of unknown parameters, during the last decade, researchers have shown that, since signals of interest are typically highly-structured, efficient recovery is possible, even if $m\ll n$.

The main focus in compressed sensing (CS), \ie solving   the described ill-posed linear inverse problem,  has been on  structures, such as  sparsity. Many signals of interest are indeed sparse or approximately sparse in some transform domain, which makes sparsity a fundamental structure, both from a theoretical and from a practical perspective. However, most of such signals of interest, in addition to being sparse, follow other more complex structures   as well. Enabling recovery algorithms to take advantage of the full structure of a class of signals could  considerably   improve the   performance. This has motivated researchers in CS  to explore algorithms that go beyond simple models such as sparsity. 

Developing a CS recovery method involves two major steps: i) studying the desired  class of signals (e.g., natural images, or MRI images) and discovering the structures that are shared   among them, and  ii) devising an efficient algorithm that given $\yv=A\xv+\zv$, finds a signal that is consistent with the measurements  $\yv $ and also the discovered structures. For instance, the well-known iterative hard thresholding algorithm \cite{blumensath2009iterative} is an algorithm that is developed for the case where the discovered  structure   is sparsity.  

One  approach to address the described  first step  is to design a method that  automatically learns complex signal models from training data. In other words,  instead of requiring domain experts to closely study a class of signals, we build an  algorithm that  discovers complex source models from   training data. 
While designing  such learning mechanisms is in general very complicated, generative functions (GFs) defined by  trained neural networks (NNs) present a successful modern tool in this area. The well-known universal approximation theory (UAT) states that with proper weights, NNs can approximate any regular function with arbitrary precision  \cite{cybenko1989approximations,funahashi1989approximate,hornik1989multilayer,barron1994approximation}. This suggests that  trained NNs operating as GFs are potentially  capable of  capturing complex unknown  structures. 

In recent years,  availability of i) large training data-sets on one hand, and ii)  computational tools such as GPUs on the other hand, has led to considerable progress in training effective NNs with state-of-art performance. While initially such networks were mainly trained  to solve classification problems, soon researchers realized that there is no fundamental reason to restrict our attention to such problems. And indeed researchers have explored application of NNs in  a wide range of applications including designing effective GFs.
%
The role of a GF is to learn the distribution of a class of signals, such that it is able to generate samples from that class. (Refer to Chapters 4 and 12  in \cite{friedman2001elements} to learn more about  using  GFs in classification.) Modern GFs achieve this goal typically through employing trained neural networks.  Variational auto-encoders (VAEs)  \cite{diederik2013variational} and generative adversarial nets (GANs) \cite{Goodfellow-et-al-2016} are examples of methods used to train complex GFs.  The success of such approaches in solving machine learning problems heavily relies on their ability to learn distributions of various complex signals, such as image and audio files. This success has encouraged researchers from other areas, such as compression, denoising and CS, to  look into the application of such methods, as  tools to capture the structures  of signals of interest. 

Given a class of signals, $\Qc\subset\mathds{R}^n$,  consider a corresponding  trained GF  $g:\Uc^k\to\mathds{R}^n$, $\Uc\subset\mathds{R}$. Assume that $g$ is trained by enough samples from $\Qc$, such that it is able to  represent signals from $\Qc$, possibly with some bounded loss. In this paper, we study the performance of an optimization-based CS recovery method that employs $g$ as a mechanism to capture the structure of signals in $\Qc$. We derive sharp bounds connecting the properties of function $g$ (its dimensions,  its error in representing the signals in $\Qc$, and its smoothness level) to the performance of the resulting recovery method.  We also study, both theoretically and empirically, the performance of an iterative  CS recovery method based on projected gradient descent (PGD) that employs $g$ to capture and enforce the source model (structure). We connect the number of measurements required by such a recovery method with the properties of function $g$.

\subsection{Notations}
Vectors are denoted by bold letters, such as $\xv$ and $\yv$. Sets are denoted by calligraphic letters, such as $\Ac$ and $\Bc$. For a set $\Ac$, $|\Ac|$ denotes its cardinality. For $x\in\mathds{R}$ and $b\in\mathds{N}^+$, $[x]_b$ denotes the $b$ bit quantized version of $x$ is defined as $[x]_b=2^{-b}\lfloor 2^bx\rfloor$. For a set $\Ac\subset\mathds{R}$ and $b\in\mathds{N}^+$, let $\Ac_b$ denote the set where every member in $\Ac$ is quantized in $b$ bits, \ie 
\[
\Ac_b\triangleq \{[x]_b: x\in\Ac\}.
\]

\subsection{Paper organization} 
Section \ref{sec:main} describes the problem of CS using GFs and states our main result on the performance of an optimization that employs a GF to capture the source structure. Section \ref{sec:algorithm} describes an efficient algorithm based on PGD to approximate the solution of the mentioned optimization which is based on exhaustive search. Section \ref{sec:related-work} reviews some related work in the literature.  Section \ref{sec:simulation} presents our simulation  results  on the performance of the  algorithm based on PGD. 
Section \ref{sec:proofs} presents the proofs of the main results and Section \ref{sec:conclusion} concludes the paper.


\section{Recovery using GFs}\label{sec:main}
Consider a class of signals represented  by a compact set $\Qc\subset\mathds{R}^n$. (For example, $\Qc$ can be the set of images of human faces, or the set of MRI images of human brains.)  Let function $g:\Uc^k\to\mathds{R}^n$ denote a GF trained to represent signals in set $\Qc$. (Throughout the paper, we assume that $\Uc$ is a bounded subset of $\mathds{R}$.) 
 \begin{definition}
 Function $g:\Uc^k\to\mathds{R}^n$ is said to cover set $\Qc$ with  distortion $\delta$, if 
\begin{align}
\sup_{\xv\in\Qc}\min_{\uv\in\Uc^k}{1\over \sqrt{n}}\|g(\uv)-\xv\| \leq \delta. \label{eq:def-delta-x}
\end{align}
\end{definition}
In other words, when function $g$  covers set $\Qc$ with  distortion $\delta$, it is able to represent all signals in $\Qc$ with a mean squared error less than $\delta^2$.

Consider the standard problem of CS, where instead of explicitly  knowing the structure of  signals in $\Qc$, we have access to function $g$, which is known to well-represent  signals in $\Qc$. In this setup, signal $\xv\in\Qc$ is measured as $\yv=A\xv+\zv$, where $A\in\mathds{R}^{m\times n}$, $\yv\in\mathds{R}^m$  and $\zv\in\mathds{R}^m$ denote the sensing matrix, the measurement vector, and the measurement noise, respectively. The goal is to recover $\xv$ from $\yv$, typically with  $m\ll n$, via  using the function $g$ to define the structure of signals in $\Qc$. 



To solve this problem,  ideally, we need to find a signal that is i) compatible with the measurements $\yv$, and ii)  representable with function $g$. Hence, ignoring the computational complexity issues, we would like to solve the following optimization problem:
\begin{align}
{\uvh}=\argmin_{\uv\in\Uc^k}\|Ag(\uv)-\yv\|,\label{eq:recovery-g-exhaustive-cont}
\end{align}
After finding $\uvh$, signal $\xv$ can be estimated as
\begin{align}
\xvh=g(\uvh).\label{eq:def-xvh-c}
\end{align}
The main goal of this  section is to theoretically study the performance of this  optimization-based recovery method. We derive bounds that establish a connection between the ambient dimension of the signal $n$, the parameters of the function $g$,  and the number of measurements $m$. 



To prove such theoretical results, we put some constraints on function $g$. More precisely, consider $\xv\in\Qc$ and let  and $\yv=A\xv+\zv$, where $A\in\mathds{R}^{m\times n}$ and $\zv\in\mathds{R}^m$.
Assume that 
\begin{enumerate}
\item  $g$ covers  $\Qc$ with distortion $\delta$, where $\delta\in(0,1)$, 
\item  $g$ is $L$-Lipschitz,
\item $\Uc$ is a bounded subset of $\mathds{R}$. 
\end{enumerate}   
  Define $\uvh$ and $\xvh$ as in \eqref{eq:recovery-g-exhaustive-cont} and \eqref{eq:def-xvh-c}, respectively. 
The following theorem characterizes the connection between the properties of function $g$ (input dimension $m$ and Lipschitz constant  $L$), the number of measurements ($m$) and the reconstruction distortion ($\|\hat{\xv}-\xv\|$). 

\begin{theorem}\label{lemma:1-noisy-main}
Consider compact set $\Qc\subset\mathds{R}^n$ and GF $g:\Uc^k\to\mathds{R}^n$ that covers $\Qc$ with distortion $\delta$. (Here, $\Uc$ is a compact subset of $\mathds{R}$.)
Consider $\xv\in\Qc$ and let $\yv=A\xv+\zv$, where $A\in\mathds{R}^{m\times n}$ and $\zv\in\mathds{R}^m$.  Assume that the entries of $A$ and $\zv$ are i.i.d.~$\Nc(0,{1\over n})$ and i.i.d.~$\Nc(0,\sigma^2)$, respectively. Define   $\uvh$ and $\xvh$  as \eqref{eq:recovery-g-exhaustive-cont} and \eqref{eq:def-xvh-c}.  Set free parameters $\eta>2$ and $\nu\in(0,1)$, such that ${1\over 2}-{\upsilon\over 2}-{1\over \eta}>0$.  Assume that $m\leq n$, and 
\begin{align}
m\geq \eta k.
\end{align} 
 Then,
\begin{align}
{1\over \sqrt{n}}\|\hat{\xv}-\xv\|  \leq &  \sqrt{6L\sigma} ({2k\over m})^{1\over 4}\delta^{{1\over 2}-{\upsilon\over 2}-{1\over \eta}}+4\sigma \delta^{-{2\over \eta}}\sqrt{k\ln {1\over \delta}\over m} +\alpha,
\end{align}
where $\alpha\triangleq    2\delta^{1-{1\over \eta}}+\delta^{{1\over 2}-{1\over \eta}}\sqrt{2\sigma} 
+ 3L\delta^{1-\upsilon -{1\over \eta}}\sqrt{k\over m}  +L\delta^{1-\upsilon}\sqrt{k\over n}=o(\delta^{{1\over 2}-{\upsilon\over 2}-{1\over \eta}})$,  with a probability larger than
\begin{align}
1- {\rm e}^{-(\upsilon - \zeta )k \ln{1\over \delta}}-  {\rm e}^{-k\ln {1\over \delta}}-3{\rm e}^{-0.8m},
\end{align}
where $\zeta=O({1\over \ln{1\over \delta}})$.
\end{theorem}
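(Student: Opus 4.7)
The plan is to reduce the continuous optimization over $\Uc^k$ to a discrete optimization over a $b$-bit quantized net $\Uc_b^k$, apply standard Gaussian concentration tools (chi-square tails for $\|A\vv\|$ and Gaussian tails for $\langle A\vv,\zv\rangle$) uniformly on the net via a union bound, and then pass back to $\Uc^k$ using the $L$-Lipschitz property of $g$. The free parameters $\eta$, $\upsilon$, and the quantization scale $b$ will be tied together at the end so that quantization error, concentration failure probability, and noise propagation all line up with the stated bound.

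First I would quantize. Using the notation $\Uc_b=\{[u]_b:u\in\Uc\}$ introduced in the notations section, form the finite codebook $g(\Uc_b^k)$, whose cardinality grows like $\delta^{-\Theta(k)}$ once $b$ is taken to be of order $\log_2(1/\delta)$. Because $g$ is $L$-Lipschitz, every $\uv\in\Uc^k$ has a quantized neighbour $\uvt\in\Uc_b^k$ with $(1/\sqrt{n})\|g(\uv)-g(\uvt)\|\le L2^{-b}\sqrt{k/n}$, so any uniform bound established on $\Uc_b^k$ transfers to $\Uc^k$ with this Lipschitz penalty. In particular, because $g$ covers $\Qc$ with distortion $\delta$, there exists $\uv^{*}\in\Uc_b^k$ with $(1/\sqrt{n})\|g(\uv^{*})-\xv\|\le\delta+L2^{-b}\sqrt{k/n}$, and the optimality of $\uvh$ combined with quantization yields
\begin{align*}
\|Ag(\uvh)-\yv\|\le \|Ag(\uv^{*})-\yv\|+L2^{-b}\sqrt{k}\,\|A\|_{\rm op}.
\end{align*}

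Next I would use the Gaussianity of $A$ and $\zv$. For any fixed $\vv$, $\|A\vv\|^2$ is a scaled chi-square with $m$ degrees of freedom concentrating at $(m/n)\|\vv\|^2$, the operator norm $\|A\|_{\rm op}$ concentrates around $1+\sqrt{m/n}$, and $\langle A\vv,\zv\rangle$ is conditionally $\Nc(0,\sigma^2\|A\vv\|^2)$. A union bound over the at most $|\Uc_b^k|^2$ difference vectors $\vv=g(\uvt)-g(\uvt')$ yields, with probability at least $1-3e^{-0.8m}-e^{-k\ln(1/\delta)}-e^{-(\upsilon-\zeta)k\ln(1/\delta)}$, both a two-sided norm bound $(m/n)(1-\epsilon_A)\|\vv\|^2\le\|A\vv\|^2\le (m/n)(1+\epsilon_A)\|\vv\|^2$ and a noise bound $|\langle A\vv,\zv\rangle|\le\epsilon_Z\|\vv\|\sqrt{m/n}$. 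Since $\ln|\Uc_b^k|$ is of order $k\ln(1/\delta)$ and $m\ge\eta k$, the resulting $\epsilon_Z$ scales as $\sigma\delta^{-2/\eta}\sqrt{k\ln(1/\delta)/m}$, which already matches the second term of the theorem, and the probabilistic failure estimates match those in the statement.

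Finally I would assemble the pieces. Squaring the optimality inequality and substituting $\yv=A\xv+\zv$ gives
\begin{align*}
\|A(g(\uvh)-\xv)\|^2\le \|A(g(\uv^{*})-\xv)\|^2+2\langle A(g(\uv^{*})-g(\uvh)),\zv\rangle+(\text{quantization residue}).
\end{align*}
Applying the lower bound on $\|A\vv\|$ on the left, the upper bound on $\|A\vv\|$ together with the noise control on the right, and then dividing by $\sqrt{m/n}$, isolates $(1/\sqrt{n})\|g(\uvh)-\xv\|$ with leading factor $\sqrt{1-\epsilon_A}$ and a right-hand side built from $\delta$, $\sigma$, and $\sqrt{(k/m)\ln(1/\delta)}$. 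The main obstacle is the bookkeeping: one must choose $b\approx(1-\upsilon)\log_2(1/\delta)$ so that the quantization penalty $L2^{-b}\sqrt{k/n}$ and the Lipschitz-transfer term both fit inside $\alpha$; one must handle the cross term $2\langle A(g(\uv^{*})-g(\uvh)),\zv\rangle$ using an AM-GM-type inequality $2ab\le a^2/\tau+\tau b^2$ whose optimized $\tau$ produces exactly the $(2k/m)^{1/4}$ factor in the leading term $\sqrt{6L\sigma}(2k/m)^{1/4}\delta^{1/2-\upsilon/2-1/\eta}$; and one must verify that every remaining term decays faster than $\delta^{1/2-\upsilon/2-1/\eta}$, so that $\alpha=o(\delta^{1/2-\upsilon/2-1/\eta})$ as claimed.
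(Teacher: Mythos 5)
Your proposal is correct and follows essentially the same route as the paper: quantize the latent space to a $b$-bit net with $b\approx(1-\upsilon)\log\frac{1}{\delta}$, apply chi-squared and Gaussian tail bounds uniformly over the net via a union bound (which is where $m\geq\eta k$ enters), transfer back to $\Uc^k$ with the Lipschitz penalty $L2^{-b}\sqrt{k}\,\sigma_{\max}(A)$, and assemble by squaring the optimality inequality. The only differences are bookkeeping-level: the paper union-bounds over single vectors $g(\uv)-\xv$ rather than pairs, extracts the bound by solving the resulting quadratic inequality in $\|g([\uvh]_b)-\xv\|$ rather than via AM--GM, and obtains the $\delta^{-2/\eta}$ and $(2k/m)^{1/4}$ factors from the lower-tail parameter choice $\tau_2=1-\delta^{2/\eta}$ and the quantization-residue-times-noise cross term $2\Delta\|\zv\|$, respectively.
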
 
The proof of Theorem \ref{lemma:1-noisy-main} is presented in Section \ref{sec:proof1-noisy}.

To better understand the implications of Theorem \ref{lemma:1-noisy-main},  the following corollary  considers the case of noiseless measurements (i.e.~$\sigma=0$). 

\begin{corollary}\label{cor:noiseless}
Consider the same setup as Theorem \ref{lemma:1-noisy-main}, where $\sigma=0$, i.e., the measurements are noise-free. Set free parameters $\eta>1$ and $\nu\in(0,1)$, such that $1-{\upsilon }-{1\over \eta}>0$.   If $m\geq \eta k$,  then with a probability larger than $1- {\rm e}^{-(\upsilon - \zeta )k \ln{1\over \delta}}-{\rm e}^{-0.8m}$,
\begin{align}
{1\over \sqrt{n}}\|\hat{\xv}-\xv\|  \leq &  {3L\over \sqrt{\eta}}  \delta^{1-{\upsilon }-{1\over \eta}}   +\alpha,
\end{align}
where $\alpha=o( \delta^{1-{\upsilon }-{1\over \eta}})$.
\end{corollary}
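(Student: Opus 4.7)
The plan is to obtain Corollary~\ref{cor:noiseless} by tracing the proof of Theorem~\ref{lemma:1-noisy-main} with the noise standard deviation $\sigma$ set to zero. At the level of the final bound, this substitution immediately kills the first two terms $\sqrt{6L\sigma}(2k/m)^{1/4}\delta^{1/2-\upsilon/2-1/\eta}$ and $4\sigma\delta^{-2/\eta}\sqrt{k\ln(1/\delta)/m}$, and also removes the contribution $\delta^{1/2-1/\eta}\sqrt{2\sigma}$ inside $\alpha$. What remains of the right-hand side is
\[
2\delta^{1-1/\eta} + 3L\delta^{1-\upsilon-1/\eta}\sqrt{k/m} + L\delta^{1-\upsilon}\sqrt{k/n}.
\]
Using the hypothesis $m\geq \eta k$, the middle term is bounded by $(3L/\sqrt{\eta})\,\delta^{1-\upsilon-1/\eta}$, which produces the leading constant stated in the corollary.

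Next I would verify that the two surviving residual terms can be absorbed into $\alpha=o(\delta^{1-\upsilon-1/\eta})$ as $\delta\to 0$. For $2\delta^{1-1/\eta}$, comparison of exponents gives $\delta^{1-1/\eta}/\delta^{1-\upsilon-1/\eta}=\delta^{\upsilon}\to 0$ since $\upsilon>0$. For $L\delta^{1-\upsilon}\sqrt{k/n}$, comparison gives $\delta^{1-\upsilon}/\delta^{1-\upsilon-1/\eta}=\delta^{1/\eta}\to 0$, and additionally $\sqrt{k/n}\to 0$ in the regime of interest ($k\le m\le n$). Both are therefore of smaller order than the dominant term, so the claimed form $(3L/\sqrt{\eta})\,\delta^{1-\upsilon-1/\eta}+\alpha$ with $\alpha=o(\delta^{1-\upsilon-1/\eta})$ is obtained.

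The subtler point, and the one I would flag as the main step that is not purely algebraic, is the relaxation of the condition on $\eta$ from $\eta>2$ and $\frac{1}{2}-\frac{\upsilon}{2}-\frac{1}{\eta}>0$ in Theorem~\ref{lemma:1-noisy-main} (which jointly require $\eta>2/(1-\upsilon)$) down to merely $1-\upsilon-\frac{1}{\eta}>0$ (i.e.\ $\eta>1/(1-\upsilon)$) in the noiseless case. These stronger hypotheses in Theorem~\ref{lemma:1-noisy-main} arise precisely from controlling the noise-driven terms (in particular, bounds that trade off $\sigma$ against powers of $\delta^{-1/\eta}$). With $\sigma=0$, those trade-off steps are vacuous: the only exponent constraint needed to make the Lipschitz covering argument deliver a bound that decays with $\delta$ is $1-\upsilon-\frac{1}{\eta}>0$. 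I would go back to the corresponding steps in the proof of Theorem~\ref{lemma:1-noisy-main} in Section~\ref{sec:proof1-noisy} and confirm that, under $\sigma=0$, the condition $\frac{1}{2}-\frac{\upsilon}{2}-\frac{1}{\eta}>0$ is never invoked.

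Finally, for the probability bound, the union bound in Theorem~\ref{lemma:1-noisy-main} combines three sources of failure: (i) a covering/approximation event controlled by $e^{-(\upsilon-\zeta)k\ln(1/\delta)}$, (ii) a noise-concentration event contributing $e^{-k\ln(1/\delta)}$, and (iii) three Gaussian-matrix concentration events each contributing $e^{-0.8m}$, presumably bounding $\|A\xv\|$, $\|A g(\uv)\|$-type terms, and an interaction term between $A$ and $\zv$. When $\sigma=0$, the noise-concentration event (ii) and the matrix--noise interaction in (iii) disappear, collapsing $3e^{-0.8m}\to e^{-0.8m}$ and removing $e^{-k\ln(1/\delta)}$ entirely, which is exactly the probability claimed in the corollary. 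The only genuine bookkeeping task is to confirm this attribution by inspecting the proof of Theorem~\ref{lemma:1-noisy-main}; once that is done, the corollary follows.
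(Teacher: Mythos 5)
Your proposal is correct and follows exactly the route the paper itself takes: the paper's own proof of Corollary~\ref{cor:noiseless} is a one-line remark that setting $\sigma=0$ kills the $O(\delta^{1/2-\upsilon/2-1/\eta})$ noise terms and relaxes the exponent condition to $1-\upsilon-1/\eta>0$, which is precisely your argument, carried out in more detail (including the correct use of $m\ge\eta k$ to get the constant $3L/\sqrt{\eta}$ and the correct collapse of the failure probability once the noise events $\Ec_z,\Ec_3,\Ec_4$ become vacuous). The only detail worth noting is that the derived bound in Section~\ref{sec:proof1-noisy} also contains a $\sigma$-free term of order $(k/m)^{1/4}\delta^{1-\upsilon/2-1/\eta}$ that survives at $\sigma=0$, but it is likewise $o(\delta^{1-\upsilon-1/\eta})$ and absorbed into $\alpha$.
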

\begin{proof}
The proof is a straightforward application  of the proof of Theorem \ref{lemma:1-noisy-main}. Note that since there is no measurement noise in this case, we will not get error terms that are $O(\delta^{{1\over 2}-{\upsilon\over 2}-{1\over \eta }})$. Therefore, The condition on $\eta$ and $\nu$ here has changed to $\eta>1$ and $1-{\upsilon }-{1\over \eta}>0$.
\end{proof}

\begin{remark}
Consider a lossless  GF for a given class of signals described by $\Qc$, a compact subset of $\mathds{R}^n$. That is, $\sup_{\xv\in\Qc}\min_{\uv\in\Uc^k}\|\xv-g(\uv)\|=0$. In this case, $\delta=0$. In such a scenario, Corollary \ref{cor:noiseless} states that, essentially, $m>k$ measurements are sufficient for almost lossless recovery. 
\end{remark}

The optimization described in \eqref{eq:recovery-g-exhaustive-cont} was first proposed and analyzed in \cite{bora2017compressed}. It was shown in \cite{bora2017compressed} that $O(k\log L)$ measurements are sufficient for accurate recovery. However, in our results (Theorem \ref{lemma:1-noisy-main} and Corollary \ref{cor:noiseless}), the number of measurements does not scale with $L$ (Lipschitz constant) or $\delta$ and instead is proportional to $k$. This is consistent with our expectations  as the Minkowski dimension of a  class of signals that are generated by  a GF with input dimension  $k$ is also $k$, and therefore, in the noiseless setting, we expect to  be able to recover the signal from $k$ noise-free measurements \cite{WuVe10}.

\begin{remark}
In the presence of Gaussian noise, first note that, unlike prior work, the error terms in Theorem \ref{lemma:1-noisy-main} scale with the noise power ($\sigma$), rather than  $\|\zv\|$.   Moreover, the dominant error term that does not disappear as $\delta$ converges to zero is $4\sigma \delta^{-{2\over \eta}}\sqrt{k\ln {1\over \delta}\over m} $. To understand the role of this term, first note that, in the presence of Gaussian noise, due to the trade-off between bias and variance, it is not optimal to choose a model with $\delta$ close to zero. Instead, the optimal choice of $\delta$ would depend on the power of noise ($\sigma$), and as the noise power increases, models with larger values of $\delta$ will result in more accurate estimates. (Refer to Section \ref{sec:simulation} for numerical validation of this point.) Second, note that the mentioned error term scales with $m$ as $O({1\over\sqrt{m}})$. This implies that, for any noise power $\sigma$ and any representation error $\delta$, as the  number of measurements $m$ grows, the effect of this term vanishes as $O({1\over\sqrt{m}})$. 
\end{remark}

\section{AE-PGD algorithm}\label{sec:algorithm}

The optimization described in \eqref{eq:recovery-g-exhaustive-cont} is a challenging non-convex optimization. The GF $g$ defined using an NN   is a differentiable function. Therefore, one approach to solving  $\min_{\uv\in\Uc^k}\|Ag(\uv)-\yv\|$ is to   apply the  standard  gradient descent (GD) algorithm \cite{bora2017compressed}.  However,  since the problem is non-convex,  there is no guarantee that the solution derived based on this approach is close to the optimal solution. Another approach is to note that 
$\min_{\uv\in\Uc^k}\|Ag(\uv)-\yv\|\equiv \min_{\xvh\in\{g(\uv):\;\uv\in \Uc^k\}}\|A\xvh-\yv\|$ and   apply  PGD as follows: For $t=0,1,\ldots$, let
\begin{align}
\sv^{t+1}&=\xvh^t+\mu A^T(\yv-A\xvh^t)\nonumber\\
\uv^{t+1}&=\argmin_{\uv\in\Uc^k}\|\sv^{t+1}-g(\uv)\|\label{eq:uhat}\\
\xvh^{t+1}&=g(\uv^{t+1}).\label{eq:PGD-alg}
\end{align}
Still the described optimization is non-convex and therefore there is no guarantee that the algorithm will converge to the desired solution. The following theorem establishes this result and  connects the number of measurements $m$, the representation error of the GF $\delta$, the input dimension of the GF $k$, with the convergence performance of the PGD-based algorithm. Furthermore, it shows the robustness of this approach to additive white Gaussian noise.

\begin{theorem}\label{thm:4}
Consider $\xv\in\Qc$,   and $\yv=A\xv+\zv$, where  $\Qc$ denotes a compact subset of $\mathds{R}^n$ and $A\in\mathds{R}^{m\times n}$.  Here, $z_1,\ldots,z_m$ are i.i.d.~$\Nc(0,\sigma^2)$. Assume that function $g:[0,1]^k\to\mathds{R}^n$ is  $L$-Lipschitz and    satisfies  \eqref{eq:def-delta-x}, for some $\delta>0$.
Define  $\uvt$ and $\xvt$ as $\argmin_{\uv\in\Uc^k}\|\xv-g(\uv)\|$ and $\xvt=g(\uvt)$, respectively.   Choose free parameters $\alpha,\upsilon \in\mathds{R}^+$ and define $\eta$, $\gamma_1$ and $\gamma_2$ as 
\begin{align}
    \eta\triangleq  {k\over n} (1+ (\sqrt{n\over m}+2)^2) L^2\delta^{2\alpha},\label{eq:def-eta} 
\end{align}
\begin{align}
    \gamma_1\triangleq  (2+\sqrt{n\over m}\;)^2( L\delta^{\alpha}\sqrt{k\over n}+1)\label{eq:def-gamma1-thm2}
    \end{align}  
    and 
    \begin{align}
        \gamma_2\triangleq    { \sqrt{2k \over n}}(2+\sqrt{n\over m}\;),\label{eq:def-gamma2}
    \end{align}
respectively. 
Assume that
\begin{align}
{m \geq  40(1+\alpha+\upsilon)k\log {1\over \delta}.}
\end{align}  
Let $\mu={1\over m}$.  For $t=0,1,\ldots$, define $(\sv^{t+1},\uv^{t+1},\xvh^{t+1})$ as \eqref{eq:PGD-alg}. Then, for every $t$, if ${1\over \sqrt{n}}\|\xvh^{t}-\xvt\|\geq \delta$, then, either ${1\over \sqrt{n}}\|\xvh^{t+1}-\xvt\|\leq \delta$, or 
\[
{{1\over \sqrt{n}}\|\xvt-\xvh^{t+1}\|\;\leq\; {0.9+\eta \over \sqrt{n}} \|\xvt-\xvh^{t}\| +\left(\sqrt{6(1+\alpha)\left(\log{1\over \delta}\right)k\over m}+\gamma_2L\delta^{\alpha} \right){\sigma\over \sqrt{n}}+\gamma_1\delta,}
\]
with a probability larger than $1- 2^{-2k\upsilon \log{1\over \delta} }-\ex^{-{m\over 2}}- \ex^{-0.1 (1+\alpha)\left(\log{1\over \delta}\right)k+2(\ln 2)k}-\ex^{- 0.15m}$.
\end{theorem}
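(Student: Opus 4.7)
The plan is to carry out a one-step projected-gradient analysis in which the projection is onto the non-convex range $\Rcal=\{g(\uv):\uv\in\Ucal^k\}$; this forces us to replace firm nonexpansiveness of a convex projection with a uniform deviation bound over $\Rcal-\Rcal$ established via a covering of $\Ucal^k$.

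\emph{Projection and master decomposition.} Since $\uvt$ is feasible in the optimization defining $\uv^{t+1}$, we have $\|\xvh^{t+1}-\sv^{t+1}\|\leq\|\xvt-\sv^{t+1}\|$, and expanding the square yields the cone inequality $\|\xvh^{t+1}-\xvt\|^2\leq 2\langle\xvh^{t+1}-\xvt,\sv^{t+1}-\xvt\rangle$. Setting $\bw=(\xvh^{t+1}-\xvt)/\|\xvh^{t+1}-\xvt\|$ and substituting $\yv=A\xv+\zv$ into $\sv^{t+1}=\xvh^t+\mu A^T(\yv-A\xvh^t)$, I obtain
\[
\tfrac12\|\xvh^{t+1}-\xvt\|\;\leq\;\bw^T(I-\mu A^TA)(\xvh^t-\xvt)\;+\;\mu\bw^T A^TA(\xv-\xvt)\;+\;\mu\bw^T A^T\zv,
\]
which splits the analysis into a contraction piece, a model-mismatch piece (controlled by $\|\xv-\xvt\|\leq\sqrt n\,\delta$), and a Gaussian noise piece. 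Crucially, both $\bw$ and the direction of $\xvh^t-\xvt$ lie in the normalized difference set of $\Rcal$, which is the object that must be covered.

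\emph{Covering $\Rcal$ and the contraction piece.} I would build a $\delta^{1+\alpha}$-net of $\Ucal^k$ of cardinality at most $(C/\delta^{1+\alpha})^k$ and push it through $g$; the $L$-Lipschitz property makes its image an $L\delta^{1+\alpha}$-net of $\Rcal$. For each pair of unit directions $(\bu,\bv)$ arising from differences of net points, standard sub-Gaussian concentration for i.i.d.\ Gaussian matrices gives $\P(|\bu^T(I-\mu A^TA)\bv|\geq t)\leq 2\ex^{-c\,t^2 m}$, and a union bound over the $\ex^{2(1+\alpha)k\log(1/\delta)}$ pairs, together with the assumption $m\geq 40(1+\alpha+\upsilon)k\log(1/\delta)$, keeps the on-net deviation at most $\approx 0.45\|\xvh^t-\xvt\|$; after the factor of $2$ from the cone inequality this produces the $0.9$ constant. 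Off-net points are handled by the $L$-Lipschitz property together with the spectral bound $\|A\|_{\rm op}\leq(\sqrt n+\sqrt m)/\sqrt n$ (valid with probability $1-\ex^{-m/2}$), and the resulting Lipschitz residual is precisely what is captured by the $\eta\propto (k/n)(1+(\sqrt{n/m}+2)^2)L^2\delta^{2\alpha}$ appearing in \eqref{eq:def-eta}.

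\emph{Mismatch term, noise term, and wrap-up.} For the mismatch term, Cauchy--Schwarz and the spectral bound give $\mu|\bw^T A^T A(\xv-\xvt)|\leq\mu\|A\|_{\rm op}^2\sqrt n\,\delta$, producing the $\gamma_1\delta$ contribution of \eqref{eq:def-gamma1-thm2}. For the noise term, conditionally on $A$ the random variable $\bw^T A^T\zv$ is $\Ncal(0,\sigma^2\|A\bw\|^2)$; a Gaussian tail bound together with a supremum over $\bw$ in the same net delivers the $\sigma\sqrt{6(1+\alpha)k\log(1/\delta)/m}$ piece, while the Lipschitz residual for $\bw$ yields the $\gamma_2 L\delta^{\alpha}\sigma$ piece via \eqref{eq:def-gamma2}. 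Summing the three bounds, dividing by $\sqrt n$, and taking the union of the net event, the spectral event, and the Gaussian tail events produces the stated inequality and failure probability. The main obstacle is the covering step: the resolution $\delta^{1+\alpha}$ must be coarse enough that the $\ex^{2(1+\alpha)k\log(1/\delta)}$-sized union bound is affordable at $m=\Theta(k\log(1/\delta))$, yet fine enough that the Lipschitz residual fits inside the small $\eta\propto L^2\delta^{2\alpha}$ term. This coupling is exactly what forces the constants $40$ and $(1+\alpha+\upsilon)$ in the sample-complexity bound as well as the $(k/n)L^2\delta^{2\alpha}$ shape of $\eta$.
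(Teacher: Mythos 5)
Your proposal follows essentially the same route as the paper's proof: the same cone inequality from optimality of the non-convex projection, the same three-term split into contraction, model-mismatch, and noise pieces, and the same discretization of $\Uc^k$ at resolution $\delta^{1+\alpha}$ (the paper phrases it as $b$-bit quantization with $b=\lceil(1+\alpha)\log\frac{1}{\delta}\rceil$ rather than a net, but it is the identical object) combined with a union-bounded concentration estimate giving the $0.45\to 0.9$ contraction constant and a Lipschitz off-net residual yielding $\eta$, $\gamma_1$, and $\gamma_2$. The argument is correct and matches the paper's in all essential respects.
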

Theorem \ref{thm:4} states that although the original optimization is not  convex, having roughly more than $40k\log {1\over \delta}$ measurements, the described PGD  algorithm converges, even in the  presence of additive white Gaussian noise. 


In order to implement the proposed iterative method  described in \eqref{eq:PGD-alg}, the step that might seem challenging is the projection step, i.e., $\uv^{t+1}=\argmin_{\uv\in\Uc^k}\|\sv^{t+1}-g(\uv)\|$. Since the cost function is differentiable, one can use GD to solve it \cite{shah2018solving}. However, since the cost is not convex, there is no guarantee that the solution will be close to the optimal. Moreover, using GD to solve this optimization, adds to the computational complexity of the problem. Therefore, instead, we consider training  a separate neural network that approximates the solution to this optimization. Concatenating  this neural network with the  neural network that define $g$  essentially  yields an ``auto-encoder'' (AE)  that maps a high-dimensional signal into low-dimensions, and then back to its original dimension. Using this perspective, the last two steps of the algorithm, basically pass $\sv^{t+1}$ through an AE. (See Fig.~\ref{fig:PGD}.) We refer to the PGD-based algorithm where the projection is achieved by  an AE as the  AE-PGD method.

\begin{figure}[htbp!]
\begin{center}
\includegraphics[width=0.5\textwidth]{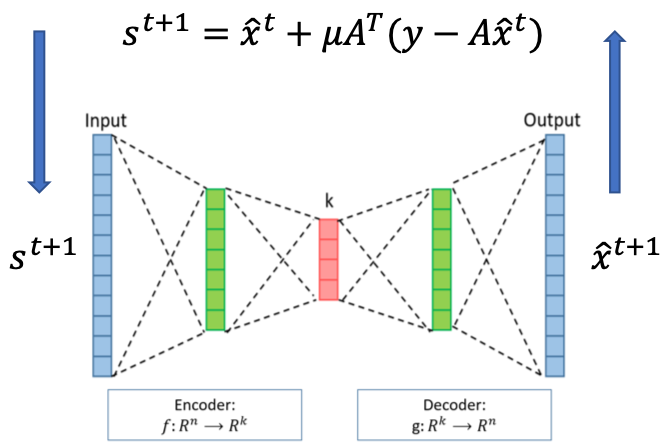}\caption{AE-PGD CS recovery. The top equation is the gradient descent and the bottom plot shows the AE employed to perform the projection of the signal.} \label{fig:PGD}
\end{center}
\end{figure}

 \section{Related work} \label{sec:related-work}
 Using NNs for CS has been an active area of research in recent years. (See \cite{mousavi2015deep,Kulkarni2016CVPR,rick2017one,borgerding2017amp,bora2017compressed,Yuan18OE,Jin0917TIP,shah2018solving,van2018compressed} for a non-comprehensive list of such results.) Closely studying the literature in this area reveals that, interestingly, the role imagined for the NN to play is not shared  among different approaches. While in some methods, NNs are directly trained to solve the inverse problem,  in others, they are trained,  independent of the CS recovery problem, as GFs that capture the source model. Our focus in this paper is on the latter type of methods where the role of the NN is to build a powerful GF that  captures the source complex structure. 
Application of NN-based GFs to solve CS  problems was first proposed in \cite{bora2017compressed}, which proved that roughly $O(k d \log n)$ measurements are enough for recovering the signal using the optimization described in \eqref{eq:recovery-g-exhaustive-cont}. ($d$ denotes the number of hidden layers.)
In \cite{shah2018solving}, an iterative algorithm based on PGD (similar to the one studied here) was proposed and studied. Here, we derive sharp theoretical guarantees for both  i) the exhaustive search method described in Section \ref{sec:main} and ii) the PGD-based algorithm. Our bounds directly connect the number of measurements with the properties of the GF, such as its input dimension and its representation quality. In both cases, we study the performance under additive white Gaussian noise as well. 

Another related line of work is on using  compression codes in designing efficient compression-based recovery methods \cite{jalali2016compression}. The goal of such methods it to  elevate the scope of structures used by CS algorithms to those used by compression codes. 
Such an optimization is  similar to ~\eqref{eq:def-xvh-c}. However, the difference between these two approaches is that while a lossy compression code can be represented by a \emph{discrete} set of codewords, a GF $g$ has a {\em continuous} input $\Uc^k$.

\section{Simulation results}\label{sec:simulation}
To further study the performance of the AE-PGD  recovery method, we examine  its performance on three different datasets:  i)  the MNIST hand-written digits  \cite{mnist}, ii) the chest X-ray images provided by NIH \cite{x-ray-images}, and iii) facial images from the CelebA dataset \cite{liu2015faceattributes}. 
The AE structure (2-layer encoder, and 2-layer decoder) (except the one reported in Section \ref{Sec:block-wise})  and the PGD algorithm are shown in Fig.~\ref{fig:PGD}. The implementations are performed in PyTorch using  Nvidia 1080 Ti GPU. We use the average peak signal-to-noise ratio  (PSNR)  to evaluate the quality of the reconstructed images. All the codes could be found at \cite{website:code2019}.

\begin{remark}
While Theorem \ref{thm:4} proves the convergence of the AE-PGD method for $\mu={1\over m}$, in our simulations,  we observed  that changing the step size could in fact improve  the performance. Therefore, using cross validation, in each setup, we  optimize the value of $\mu$. Potentially, one could further improve the performance by optimizing the step size at each iteration. However, this comes at a great computational complexity.  
\end{remark}


\subsection{MNIST \label{Sec:MNIST}}
In our first set of experiments,  we study the MNIST dataset. Each image in this dataset consists of $28\times 28$ pixels.  We use $35,000$ images for  training and $300$ images for testing. We consider an AE with fully-connected layers with sigmoid activation functions,  such that  the hidden layers of  the encoder and the decoder each  consists of $1,500$ hidden nodes.  We set the size of the output layer of the encoder and the input layer  of the GF ($k$) to $100$. The step size $\mu$ is set to $0.7$.

Fig.~\ref{fig:psnrmnist} compares the performance of the AE-PGD recovery with    Lasso \cite{tibshirani1996regression} and BM3D-AMP \cite{metzler2015bm3d} under different sampling  rates $m/n$, in both noise-free and noisy settings (middle plot corresponds to signal-to-noise-ratio (SNR) of $10$ dB).  It can be seen that in the {\em noise free} case,  when the sampling rate is low (e.g. $0.1$ and $0.05$), the AE-PGD method outperforms the other methods. When the sampling rate is higher (e.g. $0.2$ and $0.3$), BM3D-AMP achieves the best performance. 
In the {\em noisy} case, although BM3D-AMP still has the highest PSNR at high sample rates, its performance drops significantly. Some reconstructed images by the three algorithms (under noise free case) compared with the ground truth are shown in the right plot in Fig.~\ref{fig:psnrmnist}. 
\begin{figure}[!htbp]
	\includegraphics[width=0.35\textwidth]{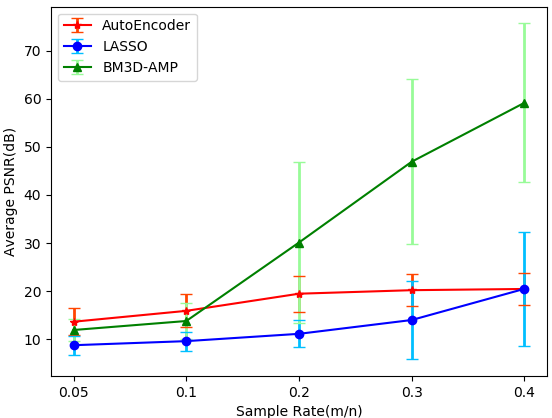}~
	\includegraphics[width=0.35\textwidth]{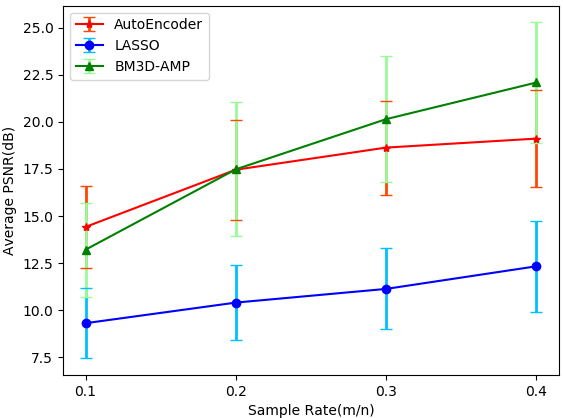}~
	\includegraphics[width=0.27\textwidth]{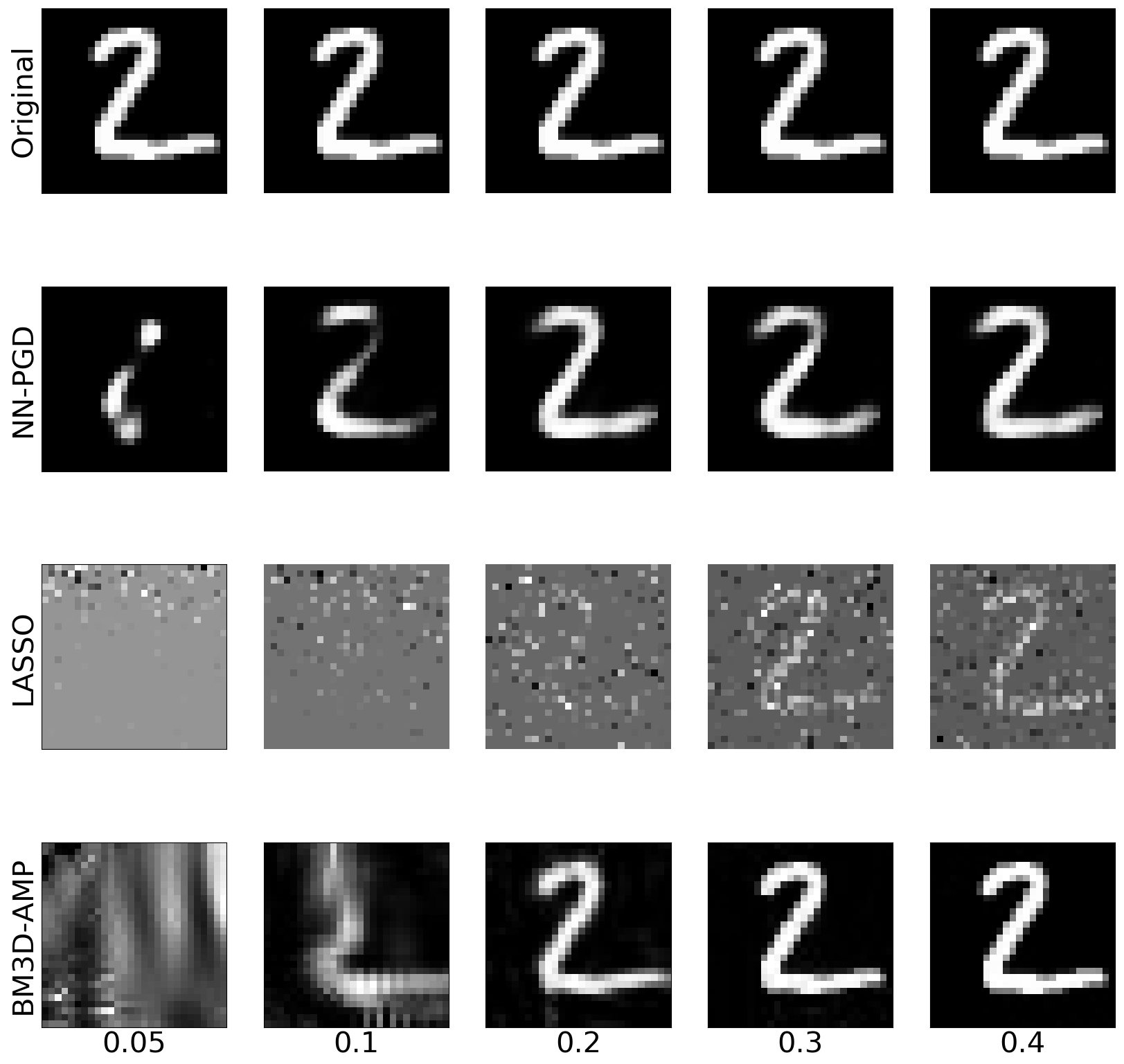}
	\caption{Comparing Lasso, BM3D-AMP and the proposed auto-encoder based inversion in the {\em noise free} case (left) and {\em noisy} case (middle with SNR = 10dB). Right: reconstructed images at different sampling rate in the noise free case.}
	\label{fig:psnrmnist}
\end{figure}

\subsection{X-ray Images \label{Sec:MNIST}}
We next explore the performance of the AE-PGD method on  chest X-ray images  \cite{x-ray-images}. 
In this dataset, each image  is of size $128\times 128$. We use $35,000$ training images and $100$ testing images. 
We compare the performance of the AE-PGD method with  BM3D-AMP and Lasso-DCT. This time, we consider two different  NNs calling the results NN1-PGD and NN2-PGD. In both cases, $\mu$ is set to $0.7$. Both NNs are structured as before with different number of nodes as follows:
\begin{itemize}
    \item [i)] NN1, $k=2000$ and there are 5000 hidden nodes in the first layer of the encoder and the second layer of the decoder; 
    \item [ii)] NN2, $k=3000$,  there are hidden 8000 hidden nodes in the first layer of the encoder and the second layer of the decoder.
\end{itemize}
In this case, all the activation functions, except those at the final layer of the decoder, are set to  rectified linear unit  (ReLU) function. The activation functions of the final layer are 
set as the sigmoid function. 

\begin{figure}[h]
	\centering
	\includegraphics[width=0.35\textwidth]{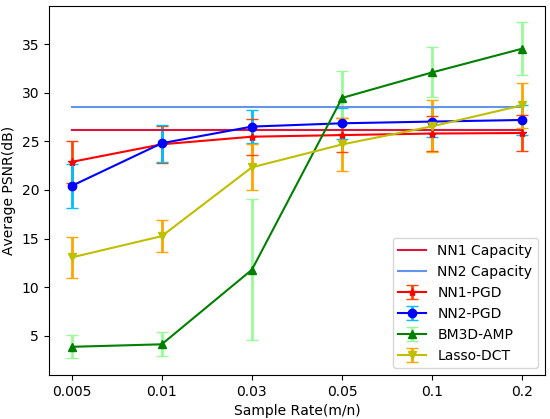}~
	\includegraphics[width=0.35\textwidth]{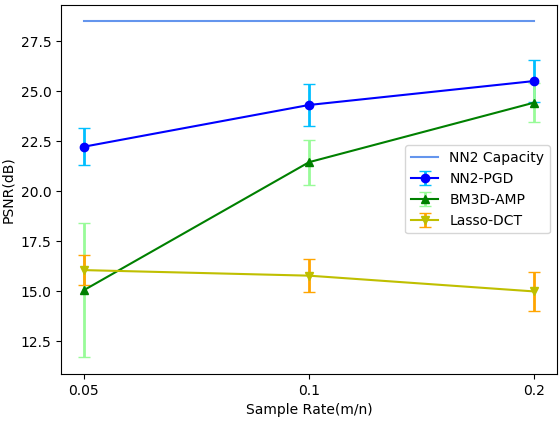}~
	\includegraphics[width=0.26\textwidth]{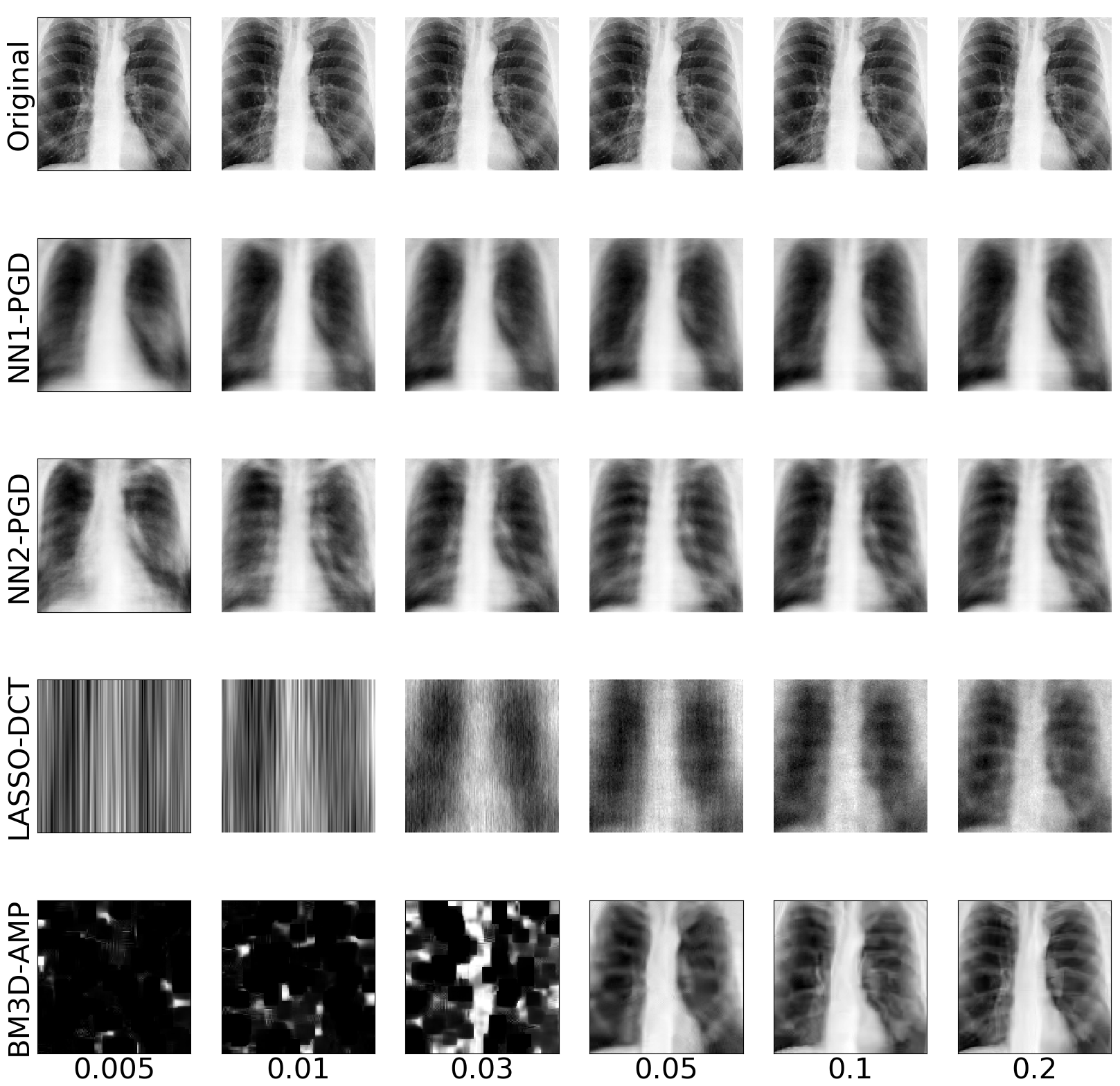}~
	\caption{PSNR of the reconstructed X-ray images under noise free (left) and noisy (middle with SNR = 10dB) cases and some example images (right).}
	\label{fig:psnr_xray}
\end{figure}

Fig.~\ref{fig:psnr_xray} shows the average PSNR on test images in both noiseless (left) and noisy (middle) settings, again at SNR = $10$ dB. The capacity of each NN refers to  the average representation error corresponding to each NN.  Clearly  the performance of the AE-PGD cannot exceed the capacity of the  NN it employs.  It can be observed that for both NNs, the AE-PGD method in fact achieves the capacity. This implies that  to improve the  performance of the AE-PGD method in high SNR regimes, one needs to design a NN with higher capacity, i.e., lower representation error.

Recall that Theorem \ref{thm:4} proves that the AE-PGD method converges, given enough measurement samples. To better understand the convergence behavior of the algorithm,  Fig. \ref{fig:iter} shows the average number of iterations of  the NN1-PGD and NN2-PGD methods, as a function of sampling rate. The step size is fixed  to $0.7$ as before. It can be seen that in both cases typically no more than $20$ iterations are required.

\begin{figure}[!htbp]
\centering
    \includegraphics[width=0.48\textwidth]{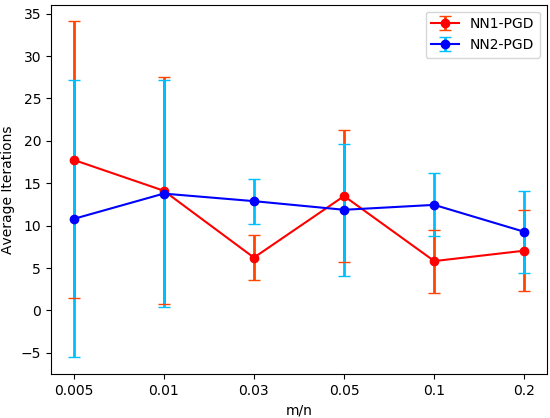}~
	\caption{Average number of iterations versus sampling rate for NN1-PGD and NN2-PGD}
	\label{fig:iter}
\end{figure}  
  

\subsection{Parallel Block-wise Neural Networks} \label{Sec:block-wise}
As shown in the previous section, the bottleneck in achieving high performance at higher sampling rates seem to be the accuracy of the representation error of the AE. In other words, to improve the performance of the AE-PGD method at higher sampling rates, we need to train AEs with representation error.   On the other hand,  the size of chest X-ray images suggests that to achieve this goal one  needs to train larger AEs.  Given our computational limitations, for example due to our GPU memory,  we next design a block-wise AE neural network, which breaks images into smaller blocks as follows. Again, the goal is to train a NN with higher capacity.  We crop  each image into four  smaller  $74\times 74$ images.  Then we train a separate AE consisting of  fours parts working  in parallel. Each part is an AE with the same structure as the one shown in Fig.~\ref{fig:PGD}, with  $k=3000$, and $8000$ hidden nodes for the other hidden layers.  
We allow some overlap between the image segments to avoid any blocking effects. For the pixels that are represented by more than one block, we take the average. As before, the step size is set to $0.7$.


\begin{figure}
    \centering
    \includegraphics[scale=0.6]{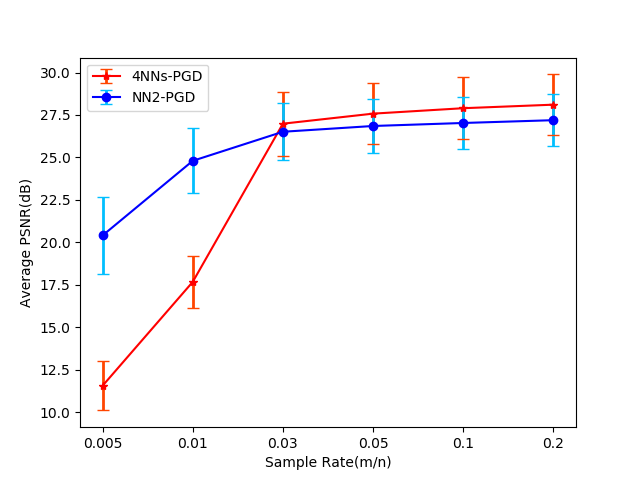}
    \caption{Average PSNR of block-wise NN (4NNS-PGD) and NN2-PGD.}
    \label{fig:psnr4NNs}
\end{figure}

In Fig. \ref{fig:psnr4NNs}, we compare the performance of the block-wise AE (referred to as 4NNs-PGD) with that of NN2-PGD (described in the previous section). It can be observed that, when the sampling rate is larger  than $0.03$, 4NNs-PGD achieves a better performance than NN2-PGD. On the other hand,   at lower sampling rates, the  network with a lower capacity (i.e., NN2) outperforms the 4NN network. We  also saw earlier that at lower sampling rates, the NN2 network outperforms BM3D-AMP. All these results show the power of AEs, as they can be designed to operate at different accuracies. In summary, the simulation results suggest that as the CS sampling rate grows, to achieve the best performance, one needs to adjust the accuracy  of the employed  AE accordingly. 

\subsection{U-net Refinement} \label{Sec:U-net}

 As shown in the previous section, training high-accuracy  AEs  is key to improving the performance of the  AE-PGD algorithm at higher sampling  rates.  Instead of directly improving the performance of the AE, in this section we explore a detour strategy as follows:  We train a U-Net \cite{U-net-ref} as a refinement function to improve the reconstructed image quality. 
To train the U-Net, we first train an AE (As described earlier) and then pass  the original training dataset  through the trained  AE to generate a new training dataset for U-Net. Then we use the original images and their reconstructions of the AE to train the U-Net such that the output is close to the original images. In other words, the U-NET receives the output of the AE and is expected to regenerate  the input of the AE, as much as possible.  After training the U-Net, we first use the PGD-AE method to find $\xvh$ as before, and then refine it by passing  it through the trained U-Net. (The U-Net structure is shown in Fig. \ref{fig:unet}.)

\begin{figure}[t]
    \centering
    \includegraphics[scale=0.6]{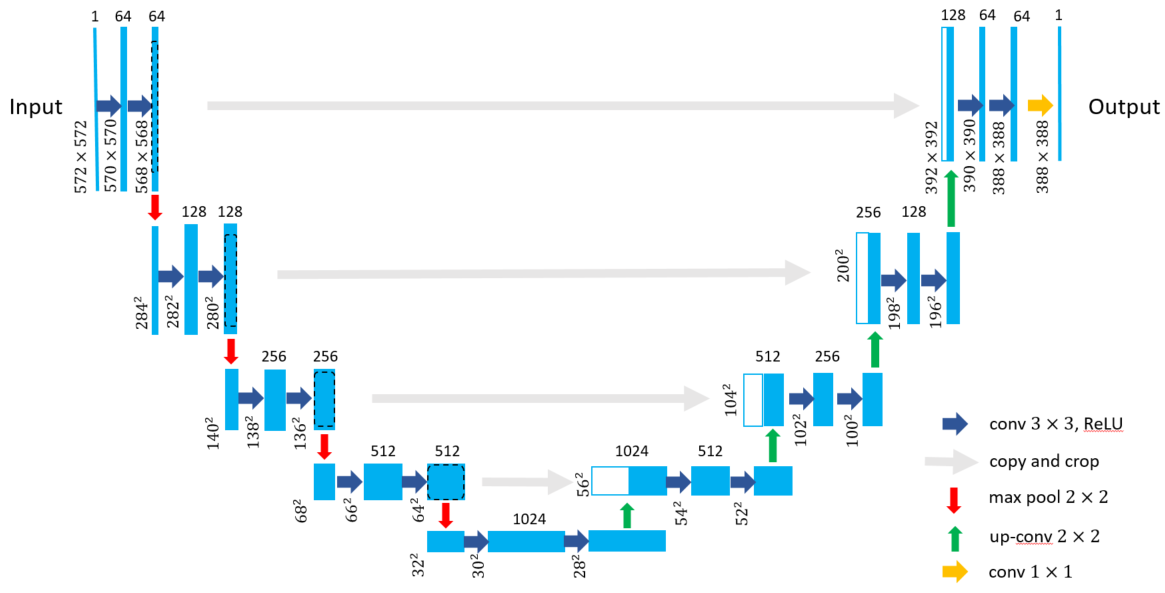}
    \caption{U-Net to refine the reconstruction results of the proposed PGD algorithm.}
    \label{fig:unet}
\end{figure}

\begin{figure}[t]
    \centering
    \includegraphics[scale=0.6]{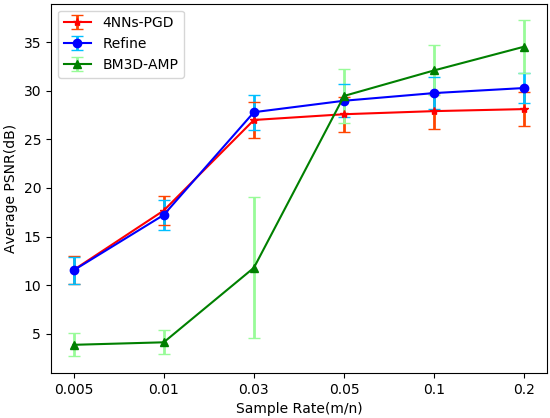}
    \caption{Average PSNR of the two stage NN (AE + U-Net).}
    \label{fig:psnrrefine}
\end{figure}

Fig. \ref{fig:psnrrefine} compares the performance of the AE-PGD with and without  U-NET refinement. Here, the AE is the blocked AE referred to as 4NN earlier. It can be observed  that this refinement step  improves the performance of the 4NNs-PGD method significantly at higher sampling rates, e.g., almost  $3$ dB at sampling rate $0.2$. However, the achieved performance  is still below that of BM3D-AMP when sampling  rate is high. 

It is worth noting that since the images in this dataset are rather noisy, and the U-Net seems to  perform  some denoising of the original images. On the other hand, the PSNR is calculated by comparing a reconstructed image with the original one.  Therefore, PSNR might not be an optimal measure to compare the performance of the algorithms. 
Inspecting the recovered images shown in Fig.~\ref{fig:image_rec_all} reveals that at  high sampling rates, BM3D-AMP  reconstructs  images that are very close to the original ones, by even recovering the noise. But the AE-PGD method with refinement reconstructs less-noisy images, which arguably include almost all the details of the original images.

\begin{figure}
     \centering
     \begin{subfigure}[b]{0.31\textwidth}
         \centering
     \includegraphics[width=1\textwidth]{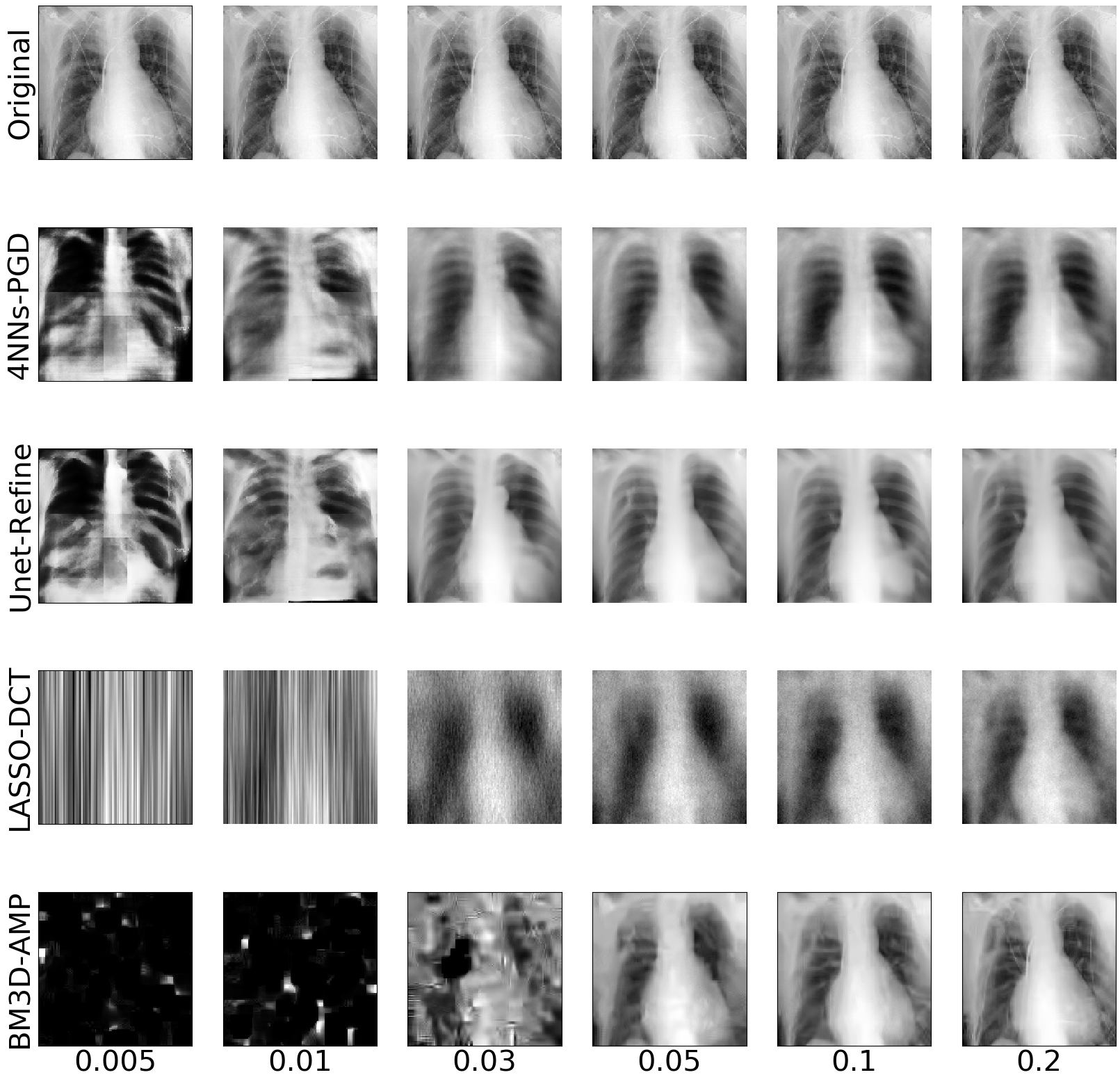}
     \end{subfigure}
     \hfill
     \begin{subfigure}[b]{0.31\textwidth}
         \centering
  \includegraphics[width=1\textwidth]{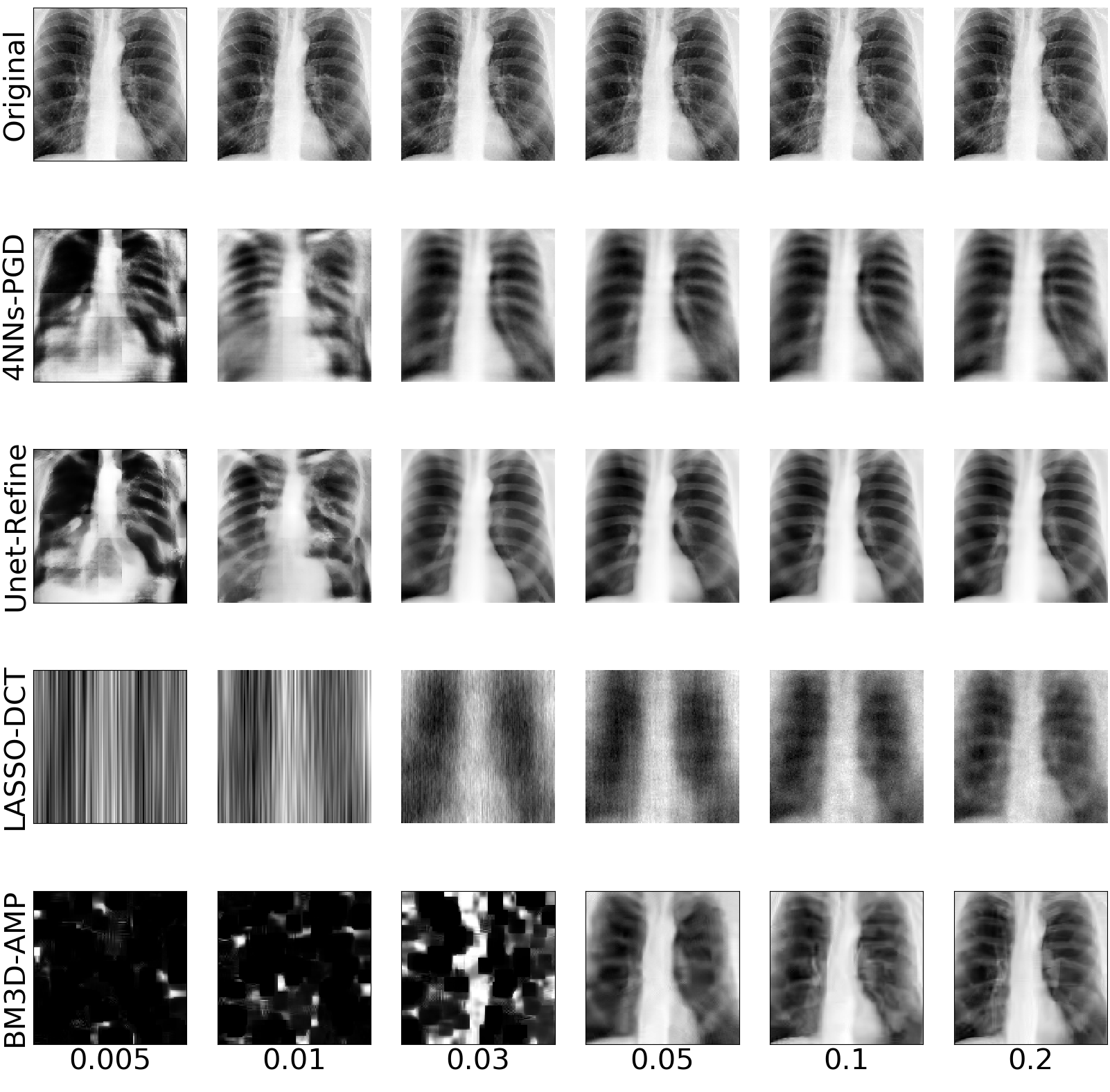}
     \end{subfigure}
     \hfill
     \begin{subfigure}[b]{0.31\textwidth}
         \centering
    \includegraphics[width=1\textwidth]{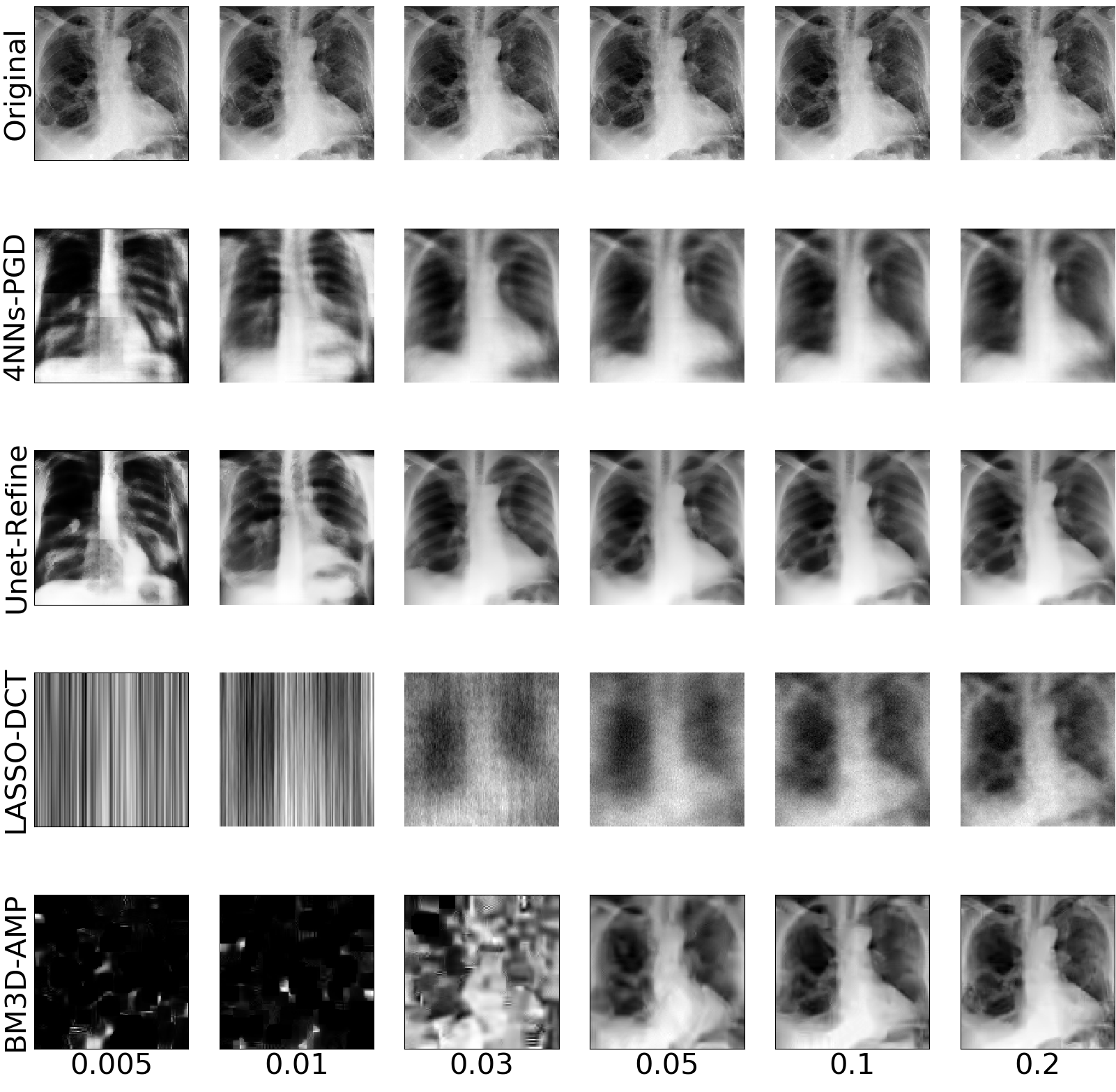}
     \end{subfigure}
 \caption{Exemplar reconstructed X-ray images by different algorithms compared with the original noisy images.}
    \label{fig:image_rec_all}
    \end{figure}


\subsection{Facial Images}
The small size of the digit images and the noisy nature of the X-ray images potentially pose as some limitations into the performance of any recovery method. As the final example, we test the AE-PGD  method on some clean facial images from the CelebA dataset \cite{liu2015faceattributes}. This time, each image consists of three $64\times64$ frames. We use  $50,000$ images for training, and $100$ images for test. We compare the performance of the NN-PGD method (with and without U-Net refinement)  with that of BM3D-AMP. For the AE, the input and output dimensions are set to $3\times 64\times$64 and $k=3000$. The number of hidden nodes in the encoder and the decoder are set to  $12,000$. All nodes in the AE are set to use the sigmoid activation function. As before, the U-Net is trained by using the images that are passed through the AE.

Fig. \ref{fig:psnrclean} show the performance of i) the AE-PGD method without refinement, and ii)  the BM3D-AMP method. For the AE-PGD, the step size $\mu$ is set to $(0.2,0.5,0.7,0.9)$ at sampling rates $(0.01,0.05,0.1,0.2)$, respectively. As before, the BM3D-AMP outperforms the AE-PGD method at higher sampling rates. The figure does not show the performance of the U-Net refinement as it has negligible effect in terms of PSNR. However,  while the refinement algorithm does not improve the performance much in terms of PSNR, as shown in Fig.~\ref{fig:image_clean_all},  it makes a  considerable visual impact on the quality of the recovered images.  

Carefully inspecting the figures shown in Fig.~\ref{fig:image_clean_all} simultaneously reveals some the   strengths and some of  the weaknesses of the AE-PGD method. The AE is essentially trained on human faces and therefore is capable of representing figures. On the other hand, its ability to capture the other details such as the background or accessories is limited. Therefore, comparing the images recovered by the AE-PGD method with those recovered by the BM3D-AMP reveals that while the former ones have better visual  qualities in terms of the faces themselves, still the overall PSNR of the latter group is better as they have a more uniform performance across the whole figure. 
\begin{figure}[t]
    \centering
    \includegraphics[scale=0.6]{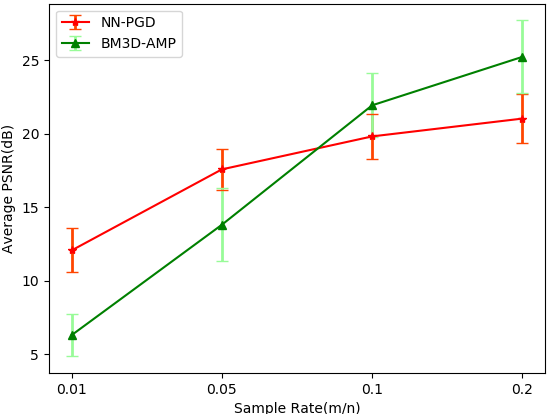}
    \caption{Average PSNR of 100 testing images in the CelebA dataset using different algorithms.}
    \label{fig:psnrclean}
\end{figure}

\begin{figure}
     \centering
     \begin{subfigure}[b]{0.31\textwidth}
         \centering
     \includegraphics[width=1\textwidth]{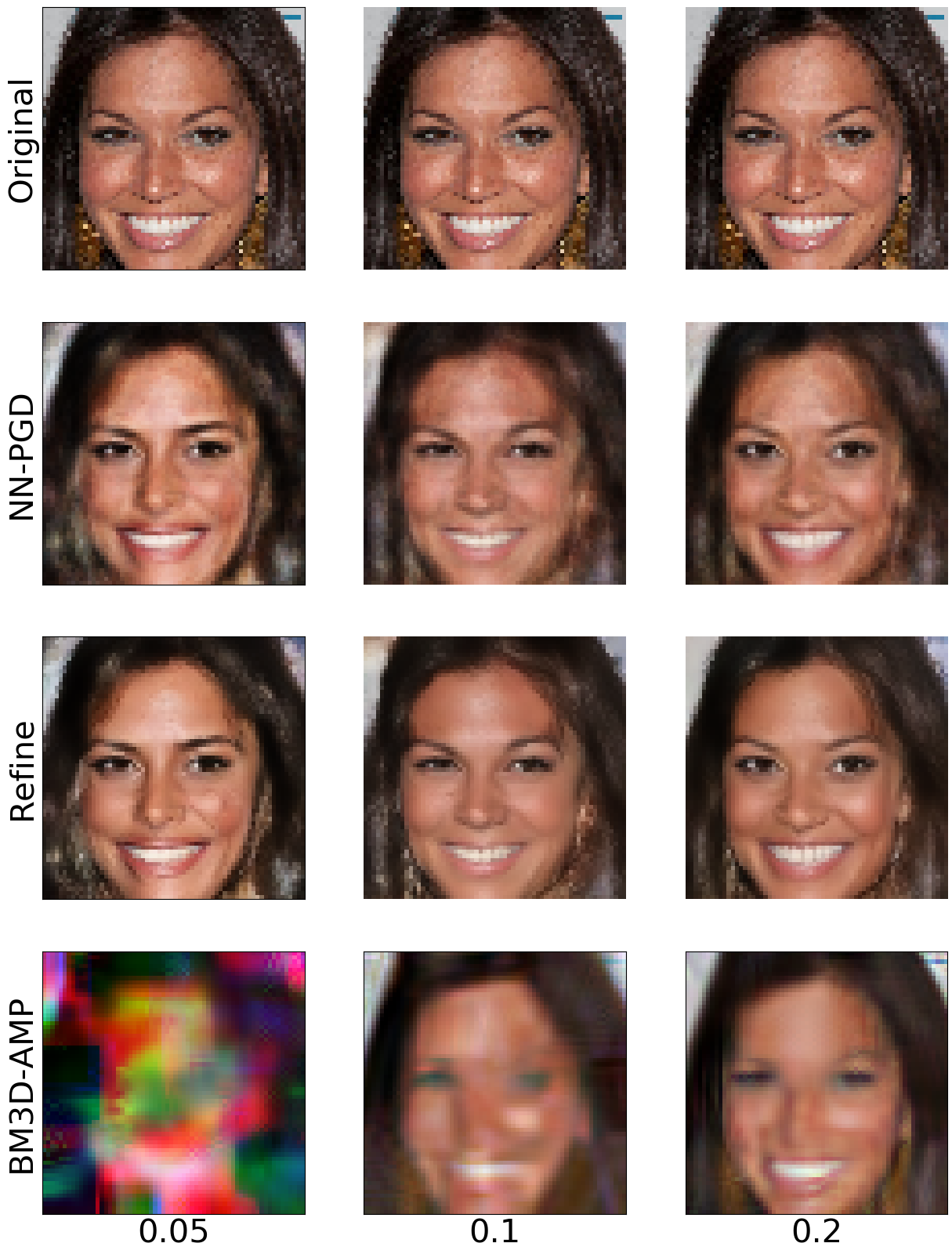}
     \end{subfigure}
     \hfill
     \begin{subfigure}[b]{0.31\textwidth}
         \centering
  \includegraphics[width=1\textwidth]{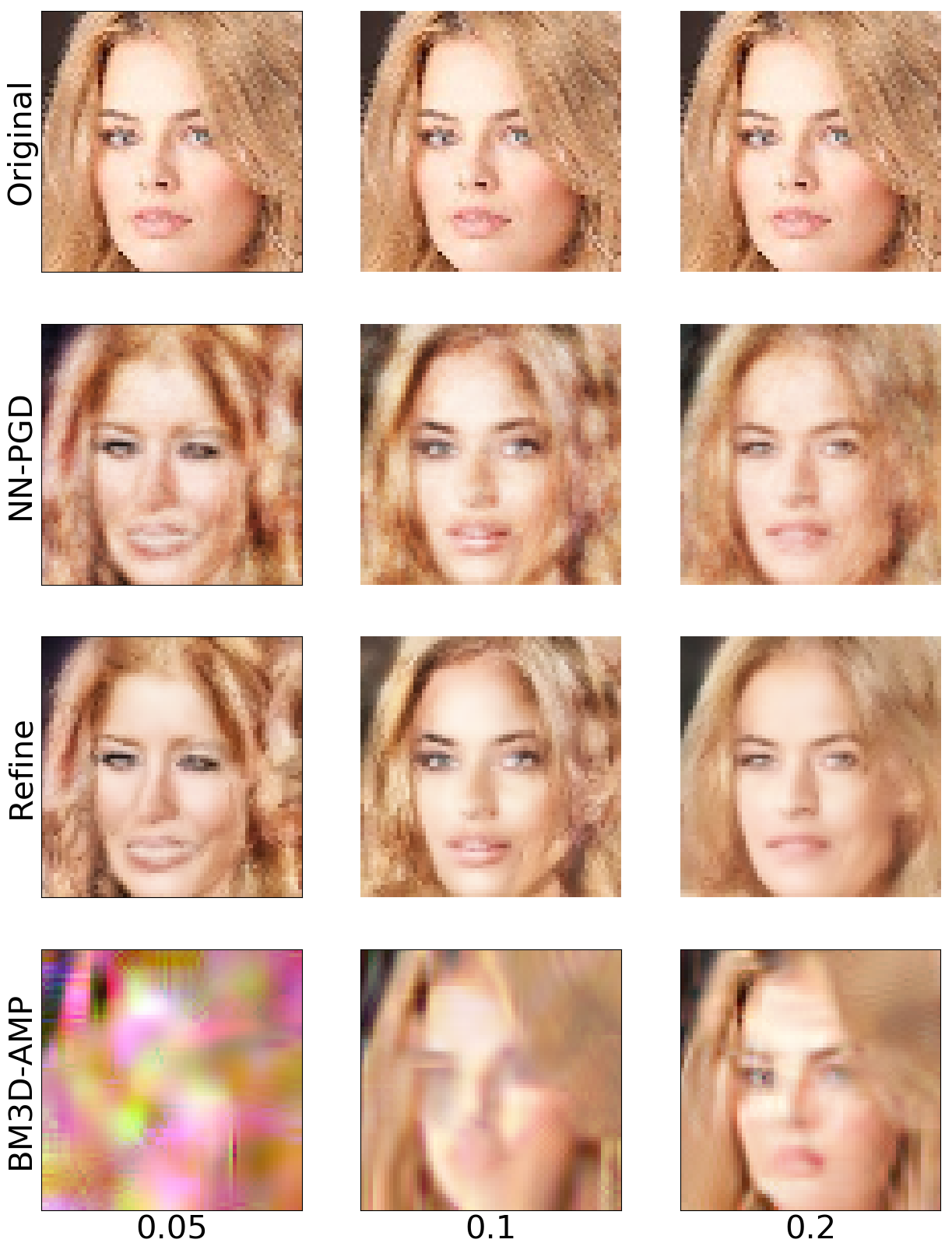}
     \end{subfigure}
     \hfill
     \begin{subfigure}[b]{0.31\textwidth}
         \centering
    \includegraphics[width=1\textwidth]{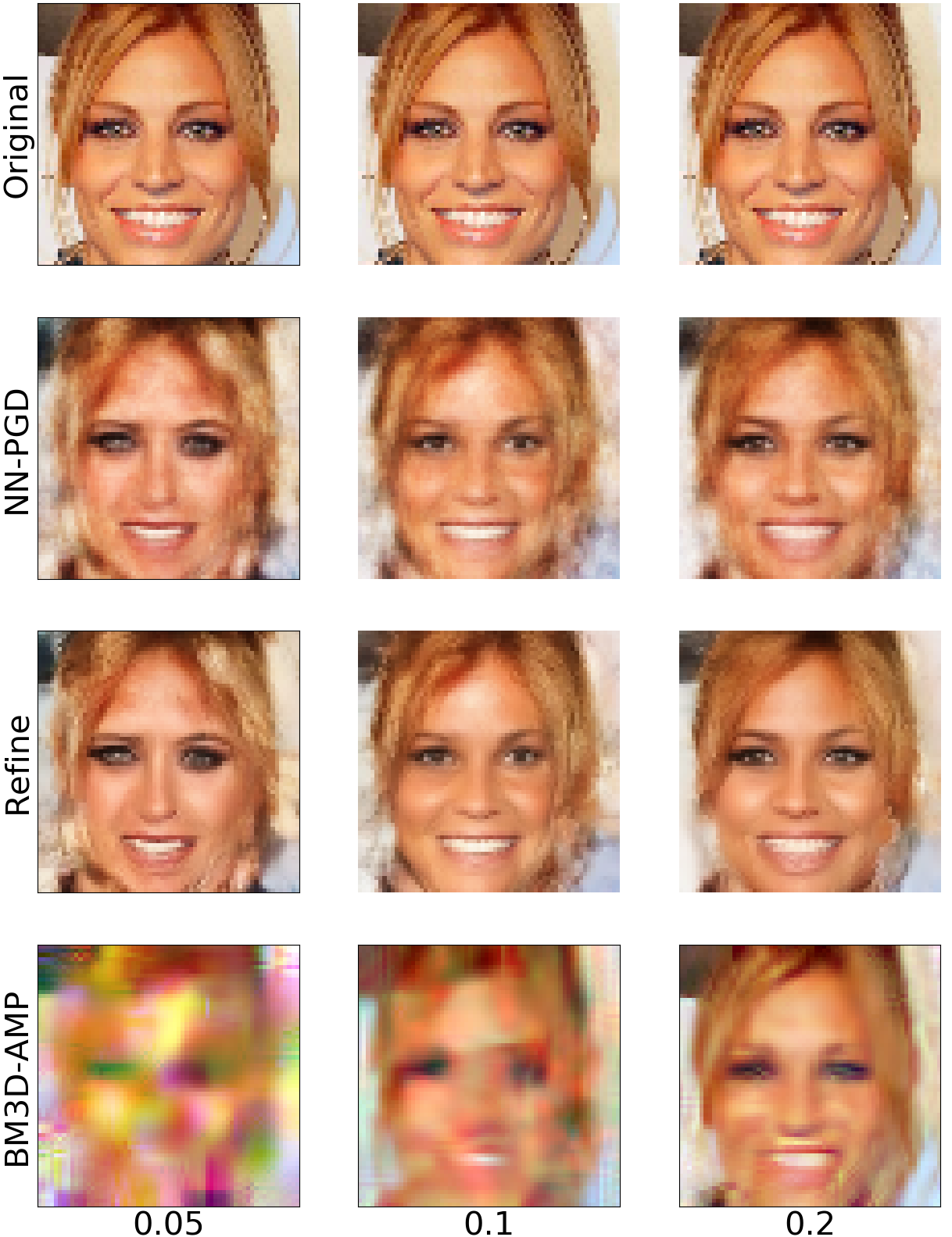}
     \end{subfigure}
        \caption{Reconstructed facial images by different algorithms at various sampling rates.}  \label{fig:image_clean_all}
\end{figure}

%

\section{Proofs}\label{sec:proofs}

The following lemma  from \cite{JalaliM:14-MEP-IT} on the concentration of Chi-squared random variables is used in the proof. 

\begin{lemma}[Chi-squared concentration] \label{chi-squared}
Assume that $U_1,\ldots,U_n$ are i.i.d.~$\Nc(0,1)$. For any $\tau \geq 0$ we have
\begin{equation}
 {\P\left(\sum_{i=1}^mU_i^2 > m (1 + \tau)\right) \leq {\rm e}^{- \frac{m}{2}(\tau - \ln(1 + \tau))}},
\end{equation}
 and for $\tau\in(0,1)$,
\begin{equation}
{\P(\sum_{i=1}^mU_i^2 <m (1 - \tau)) \leq {\rm e}^{ \frac{m}{2} (\tau + \ln(1 - \tau))}}.
\end{equation}
\end{lemma}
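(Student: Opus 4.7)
The plan is to prove both tails via the standard Chernoff (Cram\'er--Chernoff) method, relying on one explicit fact about Gaussians: if $U\sim \Nc(0,1)$, then
\[
\E[\ex^{\lambda U^2}] = (1-2\lambda)^{-1/2}, \qquad \lambda<\tfrac{1}{2},
\]
which follows by completing the square inside the Gaussian density integral. Since the $U_i$ are i.i.d., the moment generating function of $\sum_{i=1}^m U_i^2$ is $(1-2\lambda)^{-m/2}$, and everything reduces to a one-parameter optimization.

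For the upper tail, I would apply Markov's inequality to $\ex^{\lambda\sum_i U_i^2}$ with $\lambda\in(0,1/2)$:
\[
\P\!\left(\sum_{i=1}^m U_i^2>m(1+\tau)\right)\;\leq\;\ex^{-\lambda m(1+\tau)}\,(1-2\lambda)^{-m/2}\;=\;\exp\!\left(-\lambda m(1+\tau)-\tfrac{m}{2}\ln(1-2\lambda)\right).
\]
Differentiating the exponent in $\lambda$ and setting to zero gives $\lambda^\star=\tau/(2(1+\tau))$, which lies in $(0,1/2)$ for every $\tau\geq 0$. Plugging $\lambda^\star$ back in, the two terms telescope to $-(m/2)(\tau-\ln(1+\tau))$, establishing the first inequality.

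For the lower tail I would mirror the argument but use $\mu>0$ in the Chernoff bound applied to $-\sum_i U_i^2$:
\[
\P\!\left(\sum_{i=1}^m U_i^2<m(1-\tau)\right)\;=\;\P\!\left(\ex^{-\mu\sum_i U_i^2}>\ex^{-\mu m(1-\tau)}\right)\;\leq\;\ex^{\mu m(1-\tau)}\,(1+2\mu)^{-m/2}.
\]
Optimizing over $\mu>0$ yields $\mu^\star=\tau/(2(1-\tau))$, which requires $\tau\in(0,1)$ exactly as the lemma assumes. Substituting back gives exponent $(m/2)(\tau+\ln(1-\tau))$, completing the proof.

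There is no real obstacle here: the only non-trivial input is the chi-square MGF, and the rest is the Cram\'er transform, which automatically delivers the sharpest exponential bound of this form. The one place to be careful is the domain of the optimization variable — keeping $\lambda<1/2$ in the upper tail and $\tau<1$ in the lower tail so that the logarithms and the MGF remain finite — but this is enforced directly by the hypotheses on $\tau$.
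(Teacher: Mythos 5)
Your proof is correct. The paper does not prove this lemma at all---it imports it verbatim from the cited reference \cite{JalaliM:14-MEP-IT}---so there is no in-paper argument to compare against, but your Chernoff-method derivation is the standard one and every step checks out: the chi-square MGF $(1-2\lambda)^{-m/2}$, the optimizers $\lambda^\star=\tau/(2(1+\tau))$ and $\mu^\star=\tau/(2(1-\tau))$, and the resulting exponents $-\tfrac{m}{2}(\tau-\ln(1+\tau))$ and $\tfrac{m}{2}(\tau+\ln(1-\tau))$ all match the stated bounds exactly.
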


Also, the following lemma from \cite{jalali2018new} are used in the proof of Theorem \ref{thm:4}. 
\begin{lemma}\label{lemma:u-v-As-Av-concent}
Consider $\uv\in \mathds{R}^n$ and $\vv\in \mathds{R}^n$ such that $\|\uv\|=\|\vv\|=1$. Let $\a\triangleq \langle \uv,\vv \rangle $. Consider matrix $A\in\mathds{R}^{m\times n}$ with i.i.d.~standard normal entries. Then, for any $\tau>0$,
\begin{align}
\P\Big(\langle \uv,\vv\rangle-{1\over m}\langle A\uv,A\vv\rangle \geq \tau\Big)\leq \ex^{m((\a-\tau)s)-{m\over 2}\ln ((1+s\a)^2-s^2)},
\end{align}
where $s>0$ is a free parameter smaller than ${1\over 1-\a}$. Specifically, for $\tau=0.45$,
\begin{align}
\P\Big(\langle \uv,\vv\rangle-{1\over m}\langle A\uv,A\vv\rangle \geq 0.45\Big)\leq 2^{-0.05m}.
\end{align}
\end{lemma}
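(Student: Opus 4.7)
The plan is a Chernoff-style lower-tail bound applied to a sum of i.i.d.\ products of correlated Gaussians. Writing $A_i$ for the $i$-th row of $A$, one has $\langle A\uv,A\vv\rangle=\sum_{i=1}^m X_iY_i$ with $X_i:=A_i\uv$, $Y_i:=A_i\vv$. Since $\|\uv\|=\|\vv\|=1$ and $A$ has i.i.d.\ standard Gaussian entries, the pairs $(X_i,Y_i)$ are i.i.d.\ zero-mean bivariate Gaussians with unit marginal variances and correlation $\a=\langle\uv,\vv\rangle$. Thus $\E[X_iY_i]=\a$, and the event of interest is a lower-tail deviation of the empirical mean below its expectation. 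Applying the exponential Markov inequality with $s>0$ to $-\sum_i X_iY_i$ gives
\begin{align*}
\P\!\left(\a-\tfrac{1}{m}\sum_{i}X_iY_i\;\geq\;\tau\right)\;\leq\;e^{sm(\a-\tau)}\bigl(\E[e^{-sXY}]\bigr)^m.
\end{align*}

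The core calculation is the moment generating function of $XY$. I plan to use the representation $Y=\a X+\sqrt{1-\a^2}\,W$ with $W$ an independent standard Gaussian, so that $XY=\a X^2+\sqrt{1-\a^2}\,XW$. Conditioning on $X$ and integrating out $W$ via the Gaussian MGF produces a factor $\exp\!\bigl(\tfrac{s^2(1-\a^2)}{2}X^2\bigr)$, leaving the Gaussian integral $\E\!\bigl[\exp\!\bigl(-(s\a-\tfrac{s^2(1-\a^2)}{2})X^2\bigr)\bigr]$. Using $\E[e^{-cX^2}]=(1+2c)^{-1/2}$ for $c>-1/2$, this evaluates to
\begin{align*}
\E[e^{-sXY}]\;=\;\bigl(1+2s\a-s^2(1-\a^2)\bigr)^{-1/2}\;=\;\bigl((1+s\a)^2-s^2\bigr)^{-1/2}.
\end{align*}
This is finite precisely when $(1+s\a)^2-s^2=(1-s(1-\a))(1+s(1+\a))>0$, i.e.\ when $0<s<1/(1-\a)$, matching the admissible range stated in the lemma. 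Substituting into the Chernoff bound yields the first inequality.

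For the specialization $\tau=0.45$, the task reduces to choosing $s=s(\a)\in(0,1/(1-\a))$ so that the exponent $F(s,\a):=s(\a-0.45)-\tfrac{1}{2}\ln((1+s\a)^2-s^2)$ is at most $-0.05\ln 2$ uniformly over $\a\in[-1,1]$. I expect the extremal case to be $\a\to-1$: there the exponent collapses to $-1.45s-\tfrac{1}{2}\ln(1-2s)$ (consistent with the chi-squared tail of $\tfrac{1}{m}\sum X_i^2$ given by Lemma~\ref{chi-squared}), whose minimum over $s\in(0,1/2)$ is approximately $-0.040$, yielding roughly $2^{-0.058m}$—only slightly below the claimed $2^{-0.05m}$. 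Hence the main obstacle is uniform verification, not the easier endpoints $\a=0$ or $\a=1$ where much larger slack is available. A clean route is to take $s(\a)$ to be the interior stationary point of $F(\cdot,\a)$, computed from $\partial_sF=0$, and then argue by convexity of $F$ in $s$ together with monotonicity of the optimal value in $\a$ (increasing from $\a=-1$ toward $\a=1$) that the worst case is $\a=-1$; combining this with the explicit numerical bound there yields the uniform $2^{-0.05m}$ estimate. An alternative, possibly simpler, route is to fix a sub-optimal but $\a$-independent $s$ (e.g.\ a small constant), use $(1+s\a)^2-s^2\geq 1-2s(1-\a)-s^2(1-\a^2)$ and concavity of the logarithm to decouple the $\a$-dependence, and verify the inequality by direct calculation.
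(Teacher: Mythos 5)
The paper does not actually prove this lemma---it is imported verbatim from the cited reference [jalali2018new]---so there is no in-paper argument to compare against. Your derivation of the first inequality is correct and complete: the reduction to i.i.d.\ pairs $(X_i,Y_i)$ with correlation $\alpha$, the Chernoff step, and the MGF computation $\E[e^{-sXY}]=\big((1+s\alpha)^2-s^2\big)^{-1/2}$ via $Y=\alpha X+\sqrt{1-\alpha^2}\,W$ all check out, and the factorization $(1-s(1-\alpha))(1+s(1+\alpha))$ correctly recovers the admissible range $0<s<1/(1-\alpha)$. For the $\tau=0.45$ specialization, your identification of $\alpha=-1$ as the extremal case and the value of the optimized exponent there ($\approx -0.039$, i.e.\ roughly $2^{-0.056m}\le 2^{-0.05m}$) are right, but the monotonicity you invoke is not: writing $F^*(\alpha)=\min_s F(s,\alpha)$, one gets $F^*(-1)\approx-0.039$, $F^*(0)\approx-0.093$, $F^*(1)\approx-0.074$, so $F^*$ dips and comes back up rather than varying monotonically. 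The conclusion survives because $\max_{\alpha\in[-1,1]}F^*(\alpha)=F^*(-1)\le -0.05\ln 2$, but you should justify the uniform bound by a direct argument (or a short continuity/compactness check) rather than the claimed monotonicity.
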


\begin{lemma}\label{lemma:gaussian-vectors}
Consider $\uv$ and $\vv$, where $u_1,\ldots,u_n, v_1,\ldots,v_n$ are  i.i.d.~$\Nc(0,1)$.   Then the distribution of  $\langle \uv,\vv \rangle=\sum_{i=1}^nu_iv_i$ is the same as the distribution of $\|\uv\|G$, where $G\sim\Nc(0,1)$ is independent of $\|\uv\|$.
\end{lemma}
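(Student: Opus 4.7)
The plan is to prove the identity by conditioning on $\uv$ and exploiting the fact that a linear combination of i.i.d.\ standard Gaussians is itself Gaussian. First I would fix $\uv$ and view $\langle \uv,\vv\rangle = \sum_{i=1}^n u_i v_i$ as a deterministic linear functional of $\vv$. Since $\vv$ has i.i.d.\ $\Nc(0,1)$ coordinates and is independent of $\uv$, the conditional distribution of $\sum_{i=1}^n u_i v_i$ given $\uv$ is $\Nc\bigl(0, \sum_{i=1}^n u_i^2\bigr) = \Nc(0, \|\uv\|^2)$.

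Next I would convert this conditional law into the stated product form. Writing $G \triangleq \langle \uv,\vv\rangle/\|\uv\|$ on the event $\{\uv\neq 0\}$ (which has probability one), the previous step says $G \mid \uv \sim \Nc(0,1)$. Because this conditional distribution does not depend on $\uv$, it coincides with the marginal distribution of $G$, and moreover $G$ is independent of $\uv$, hence independent of $\|\uv\|$. Therefore $\langle \uv,\vv\rangle = \|\uv\|\,G$ with $G \sim \Nc(0,1)$ independent of $\|\uv\|$, which is exactly the claim.

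As an alternative route that avoids the conditional-distribution bookkeeping, I would use rotational invariance: choose a (measurably constructed) orthogonal matrix $R = R(\uv)$ with $R\uv = \|\uv\|\,\mathbf{e}_1$, write
\[
\langle \uv,\vv\rangle = \langle R\uv, R\vv\rangle = \|\uv\|\,(R\vv)_1,
\]
and observe that, conditional on $\uv$, the distribution of $R\vv$ is the same as that of $\vv$ by orthogonal invariance of the standard Gaussian on $\mathds{R}^n$, so $(R\vv)_1$ is $\Nc(0,1)$ and independent of $\uv$.

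There is essentially no hard step here; the only subtlety worth stating carefully is the passage from ``the conditional law of $G$ given $\uv$ is $\Nc(0,1)$'' to ``$G \sim \Nc(0,1)$ marginally and $G \perp \|\uv\|$,'' which is the standard fact that a regular conditional distribution not depending on the conditioning variable yields independence and equals the marginal.
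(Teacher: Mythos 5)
Your proof is correct. The paper itself states this lemma without proof (it is imported from a cited reference), so there is no in-paper argument to compare against; your conditioning argument --- that given $\uv$, the linear form $\sum_{i=1}^n u_i v_i$ is $\Nc(0,\|\uv\|^2)$, so $G=\langle\uv,\vv\rangle/\|\uv\|$ has conditional law $\Nc(0,1)$ not depending on $\uv$ and is therefore independent of $\|\uv\|$ --- is the standard and complete one, and your rotational-invariance alternative is equally valid.
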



\subsection{Proof of Theorem \ref{lemma:1-noisy-main}}\label{sec:proof1-noisy}

Define  $\tilde{\uv}$ and  $\tilde{\xv}$ as \eqref{eq:recovery-g-exhaustive-cont} and \eqref{eq:def-xvh-c}, respectively. 
Since $\hat{\uv}$ is the minimizer of $\|Ag(\uv)-\yv\|$, over all $\uv\in\Uc^k$, we have $\|Ag(\hat{\uv})-\yv\|\leq \|Ag(\tilde{\uv})-\yv\|.$ Moreover, by the triangle inequality,
\begin{align}
\|Ag([\hat{\uv}]_b)-\yv\| \leq \|Ag(\hat{\uv})-\yv\|+\|Ag(\uvh)-Ag([\hat{\uv}]_b)\|.
\end{align}
Recall that $\yv=A\xv+\zv$. Therefore, 
\begin{align}
\|A(g([\uvh]_b)-\xv)-\zv \| &\leq \|A(g(\tilde{\uv})-\xv)-\zv\|+\|A(g(\hat{\uv})-g([\hat{\uv}]_b))\|\nonumber\\
&\leq \|A(g(\tilde{\uv})-\xv)-\zv\|+\sigma_{\max}(A) L\|\hat{\uv}-[\hat{\uv}]_b\|\nonumber\\
&\leq \|A(g(\tilde{\uv})-\xv)-\zv\|+2^{-b}\sqrt{k}\sigma_{\max}(A) L.\label{eq:noisy-thm-step1}
\end{align}
Define $\ev_1$ and $\ev_2$  as  $\ev_1=g(\uvt)-\xv$ and $\ev_2=g([\uvh]_b)-\xv$, respectively. Define their normalizer versions as $\bar{\ev}_i=\ev_i/\|\ev_i\|$, $i=1,2$. Given $\tau_1>0$ and  $\tau_2\in(0,1)$, define events $\Ec_1$ and $\Ec_2$  as 
$\Ec_1\triangleq \{\|A\bar{\ev}_1 \|\leq \sqrt{{m\over n} (1+\tau_1)}\},$ and 
$\Ec_2\triangleq \Ec_2=\{\|A(g(\uv)-\xv)\|\geq \sqrt{{m\over n} (1-\tau_2)}\|g({\uv})-\xv\|:\;\forall \uv\in\Uc_b^k\}$, respectively. 
%
Furthermore, given $\tau_z>$, $\tau_3>0$ and $\tau_4>0$, define events $\Ec_a$, $\Ec_z$, $\Ec_3$ and $\Ec_4$ as
\begin{align}
\Ec_A\triangleq \Big\{\sigma_{\max}(A) \le 1+2\sqrt{m\over n}\Big\},
\end{align}
\begin{align}
\Ec_z\triangleq \{ \|\zv\|\leq \sigma\sqrt{m(1+\tau_z)}\;\},
\end{align}
\begin{align}
\Ec_3\triangleq \{|\langle A\bar{\ev}_1, \zv\rangle|\leq   \sigma \tau_3\sqrt{m\over n}\;\},
\end{align}
and
\begin{align}
\Ec_4\triangleq \{ |\langle A\bar{\ev}, \zv\rangle|\leq \sigma\tau_4 \sqrt{m\over n}: \bar{\ev}={g(\uv)-\xv\over \|g(\uv)-\xv\|}, \uv\in\Uc_b^k \},
\end{align}
respectively. From Lemma \ref{chi-squared}, 
$
\P(\Ec_1^c)\leq {\rm e}^{- \frac{m}{2}(\tau_1 - \ln(1 + \tau_1))},
$
and, for a fixed $\uv\in\Uc_b^n$, with a probability larger than ${\rm e}^{ \frac{m}{2}(\tau_2 + \ln(1 - \tau_2))}$, 
\begin{align}
\|A(g(\uv)-\xv)\|\geq m(1-\tau_2)\|g({\uv})-\xv\|.
\end{align}
Therefore, applying the union bound, it follows that
\[
\P(\Ec_2^c)\leq |\Uc_b|^k {\rm e}^{\frac{m}{2}(\tau_2 +\ln(1 - \tau_2))}.
\]
 Given a unit-norm vector $\bar{\ev}\in\mathds{R}^n$, $\sqrt{n}A\bar{\ev}$ is a random vector in $\mathds{R}^m$ with i.i.d.~$\Nc(0,1)$ entries. Therefore, according to Lemma \ref{lemma:gaussian-vectors}, $\sqrt{n} \langle A\bar{\ev}, \zv\rangle$ has the same distribution as $\|\zv\|G$, where $G\sim\Nc(0,1)$ is independent of $\|\zv\|$. 
On the other hand, using the law of total probability, $\P(\Ec_3\cap\Ec_4)\geq 1-  \P(\Ec_z^c)-\P(\Ec_3,\Ec_z)-\P(\Ec_4,\Ec_z)$, $i\in\{3,4\}$. Also 
\[
\P(\Ec_z^c)\leq {\rm e}^{-{m\over 2}(\tau_z-\ln(1+\tau_z))},
\]
and
\begin{align*}
\P(\Ec_3^c,\Ec_z)&= \P(|G| \|\zv\|>\tau_3\sigma \sqrt{m}, \|\zv\|\leq \sigma\sqrt{m(1+\tau_z)} )\nonumber\\
&\leq \P\Big(|G| >{\tau_3\over \sqrt{1+\tau_z}}\Big)\leq 2 {\rm e}^{-{\tau_3^2\over 1+\tau_z}}.
\end{align*}
Similarly, using the union bound, $\P(\Ec_4^c,\Ec_z)\leq 2|\Uc_b|^k {\rm e}^{-{\tau_4^2\over 1+\tau_z}}$.

Conditioned on $\Ec_A$, $2^{-b}\sqrt{k}\sigma_{\max}(A) L\leq  \Delta$,
where
\begin{align}
\Delta \triangleq  2^{-b}\sqrt{k} L\Big( 1+2\sqrt{m\over n}\Big),
\end{align}
Therefore, conditioned on $\Ec_A$, raising both sides of \eqref{eq:noisy-thm-step1} to power two and cancelling the common  $\|\zv\|^2$ term, it follows that
\begin{align}
\|A\ev_2\|^2-2\langle A\ev_2, \zv\rangle &\leq \|A\ev_1\|^2- 2\langle A\ev_1, \zv\rangle+2\Delta \|A\ev_1+\zv\|+\Delta^2\nonumber\\
&\leq \|A\ev_1\|^2- 2\langle A\ev_1, \zv\rangle+2\Delta (\|A\ev_1\|+\|\zv\|)+\Delta^2,
\end{align}
where the last line follows from the triangle inequality. Therefore, 
\begin{align}
\|\ev_2\|^2 \|A\bar{\ev}_2\|^2\leq\|\ev_1\|^2 \|A\bar{\ev}_1\|^2+2\|\ev_1\| |\langle A\bar{\ev}_1, \zv\rangle|+2\|\ev_2\| |\langle A\bar{\ev}_2, \zv\rangle|+2\Delta (\|\ev_1\|\|A\bar{\ev}_1\|+\|\zv\|)+\Delta^2.\label{eq:thm2-noisy-1}
\end{align}
Conditioned on $\Ec_1\cap\Ec_2\cap\Ec_3\cap\Ec_4$, noting that  ${1\over  \sqrt{n}}\|\ev_1\|\leq\delta$, it follows from \eqref{eq:thm2-noisy-1} that
\begin{align}
\|\ev_2\|^2 m(1-\tau_2) - 2\|\ev_2\| \sigma \tau_4 \sqrt{m n}- n(\gamma_1+\gamma_2)\leq 0,\label{eq:main-noisy-2}
\end{align}
where 
\begin{align}
\gamma_1 \triangleq (1+\tau_1) \delta^2m+ 2   \sigma \tau_3\delta\sqrt{m},  \label{eq:def-gamma1}
\end{align}
and
\begin{align}
\gamma_2\triangleq \Delta^2+2\Delta\sqrt{m}\Big(\sigma\sqrt{1+\tau_z}+\delta\sqrt{1+\tau_1}\Big). \label{eq:def-gamma2-thm2}
\end{align}
Since  \eqref{eq:main-noisy-2} is a second order equation, \eqref{eq:def-gamma2} implies that $\|\ev_2\|$ should be smaller than the largest root of this equation. Noting that  $\sqrt{a^2+b^2+c^2}\leq |a|+|b|+|c|$, for all $a$, $b$, and $c$ in $\mathds{R}$,  it follows that 
\begin{align}
{1\over \sqrt{n}}\|\hat{\xv}_b-\xv\|  \leq  {1\over  \sqrt{m} } \Big( {2\sigma \tau_4 \over 1-\tau_2}+\sqrt{\gamma_1\over 1-\tau_2}+\sqrt{\gamma_2\over 1-\tau_2}\;\Big).\label{eq:bound-e2-norm}
\end{align}

To finish the proof, we need to set the free parameters appropriately, such that the error probabilities converge to zero.  
\begin{itemize}
\item Set $b = \lceil(1-\upsilon)\log {1\over \delta}\rceil.$
\item Let  $\tau_1=3$. Therefore, $\P(\Ec_1^c)\leq {\rm e}^{- \frac{m}{2}(3 - \ln 4)}\leq {\rm e}^{- 0.8 m}$. 

\item Let $\tau_2=1-\delta^{{2\over \eta}}$. Since $\Uc$ is a compact set, there exist integer numbers  $a_1$ and $a_2$, such that $\Uc\subseteq [a_1,a_2]$. Therefore, $ |\Uc_b|\leq a2^{-b}$, where $a\triangleq a_2-a_1$.  Therefore, since, for $\tau_2\in(0,1)$,   $(\tau_2 + \ln(1 - \tau_2))\leq 0$ and $m\geq \eta k$ by assumption, it follows that 
\begin{align}
& {k \ln |\Uc_b|+\frac{m}{2}(\tau_2 + \ln(1 - \tau_2))}\nonumber\\
&{\leq k( \ln a+b\ln2 )+{\eta k\over 2} (\tau_2 + \ln(1 - \tau_2))}\nonumber\\
&{\leq k( \ln a-{(1-v)}\ln \delta )+{\eta k\over 2} (\tau_2 + \ln(1 - \tau_2))}, \label{eq:bound-expo}
\end{align}
where the last line follows because $b=\lceil (1-\upsilon) \log {1\over \delta }\rceil$ and hence $b\ln 2\leq  (1-\upsilon) \log{1\over \delta} \ln 2 =-(1-\upsilon)\ln \delta$. Therefore, from \eqref{eq:bound-expo}, inserting the value of $\tau_2$, we have 
\begin{align}
&k \ln |\Uc_b|+\frac{m}{2}(\tau_2 + \ln(1 - \tau_2))\nonumber\\
&{\leq k( \ln a-(1-\upsilon)\ln \delta )+{\eta k\over 2} {(1 {-\delta^{2\over\eta}} +{2\over \eta} \ln\delta)}} \nonumber\\
&{= - k (\upsilon - \zeta) \ln {1\over \delta}},
\end{align}
where
\vspace{-4mm}
\begin{align}
{\zeta={\ln a +{\eta \over 2} {(1-\delta^{2\over\eta})} \over \ln{1\over \delta}}.}\label{eq:def-zeta}
\end{align}
{Note that $\zeta$ only depend on $a$, $\eta$ and $\delta$ and $\zeta=O(1/\ln{1\over \delta})$. } Therefore, $\P(\Ec_2)\leq {\rm e}^{-(\upsilon - \zeta )k \ln{1\over \delta}}$.

\item Set $\tau_z=1$. Then, $\P(\Ec_z^c)\leq {\rm e}^{-{m\over 2}(1-\ln 2)}\leq {\rm e}^{-0.15 m}$.  
\item Set $\tau_3=\sqrt{m}$. As proved earlier, $\P(\Ec_3,\Ec_z^c)\leq  2 {\rm e}^{-{\tau_3^2\over 1+\tau_z}}$. Hence, for $\tau_3=\sqrt{m}$ and $\tau_z=1$, $\P(\Ec_3,\Ec_z^c)\leq {\rm e}^{-{m\over 2}}$. 
\item  Set $\tau_4=2\sqrt{k\ln {1\over \delta} }$. We need to  set $\tau_4$ such that $|\Uc_b|^k {\rm e}^{-{\tau_4^2\over 1+\tau_z}}$ converges to zero, as the dimensions of the problems grow. Note that 
$\ln(|\Uc_b|^k {\rm e}^{-{\tau_4^2\over 1+\tau_z}})=kb \ln 2-{1\over 2}\tau_4^2= k\lceil(1-\upsilon)\log {1\over \delta} \rceil \ln 2-{1\over 2}\tau_4^2\leq  k\ln {1\over \delta} -{1\over 2}\tau_4^2$. Setting $\tau_4=2\sqrt{k\ln {1\over \delta} }$, it follows that $\ln(|\Uc_b|^k {\rm e}^{-{\tau_4^2\over 1+\tau_z}})\leq -k\ln {1\over \delta}$. 
\end{itemize}
For the selected values of the parameters, we have
\begin{align}
\gamma_1 =2(2 \delta+    \sigma) \delta m, 
\end{align}
and 
\begin{align}
\gamma_2= \Delta^2+2\Delta\sqrt{m}(\sigma\sqrt{2}+2\delta).\label{eq:def-gamma2}
\end{align}
But, since  by assumption,  $m\leq n$, $\Delta \leq   3 2^{-b}\sqrt{k} L\leq 3 \sqrt{k} L\delta^{1-\upsilon}$. Therefore, 
\[
\gamma_2\leq  9k  L^2\delta^{2-2\upsilon}+6 \sqrt{km} L\delta^{1-\upsilon}(\sigma\sqrt{2}+2\delta).
\]
In summary, combining the bounds on $\gamma_1$ and $\gamma_2$ with \eqref{eq:bound-e2-norm}, it follows that
\begin{align}
{1\over \sqrt{n}}\|\hat{\xv}_b-\xv\|  \leq   4\sigma \delta^{-{2\over \eta}}\sqrt{k\ln {1\over \delta}\over m} +  \delta^{{1\over 2}-{1\over \eta}}\sqrt{2(2 \delta+    \sigma)}   +\delta^{{1\over 2}-{\upsilon\over 2}-{1\over \eta}} \sqrt{9({k\over m})  L^2\delta^{1-\upsilon}+6 \sqrt{k\over m} L(\sigma\sqrt{2}+2\delta)}.
\end{align}
Finally, to finish the proof, note that 
\begin{align}
\|\hat{\xv}-\xv\|\leq \|\hat{\xv}-\hat{\xv}_b\|+\|\hat{\xv}_b-\xv\|\leq L\|\uv-[\uv]_b\|+\|\hat{\xv}_b-\xv\|\leq L2^{-b}\sqrt{k}+\|\hat{\xv}_b-\xv\|.
\end{align}
Therefore, since $\sqrt{a^2+b^2+c^2}\leq |a|+|b|+|c|$, we have
\begin{align}
{1\over \sqrt{n}}\|\hat{\xv}-\xv\|  \leq &  
4\sigma \delta^{-{2\over \eta}}\sqrt{k\ln {1\over \delta}\over m} +  2\delta^{1-{1\over \eta}}+\delta^{{1\over 2}-{1\over \eta}}\sqrt{2\sigma}  \nonumber\\
& + 3L\delta^{1-\upsilon -{1\over \eta}}\sqrt{k\over m}  +\sqrt{6L\sigma} ({2k\over m})^{1\over 4}\delta^{{1\over 2}-{\upsilon\over 2}-{1\over \eta}} +2\sqrt{3} ({k\over m})^{1\over 4} \delta^{1-{\upsilon\over 2}-{1\over \eta}}+L\delta^{1-\upsilon}\sqrt{k\over n},
\end{align}
where concludes the proof as $\alpha\triangleq    2\delta^{1-{1\over \eta}}+\delta^{{1\over 2}-{1\over \eta}}\sqrt{2\sigma} 
 + 3L\delta^{1-\upsilon -{1\over \eta}}\sqrt{k\over m}  +L\delta^{1-\upsilon}\sqrt{k\over n}=o(\delta^{{1\over 2}-{\upsilon\over 2}-{1\over \eta}})$.

\subsection{Proof of Theorem \ref{thm:4}}\label{sec:proof4}

Recall that  ${\uvh}=\argmin_{\uv\in\Uc^k}\|g(\uv)-\xv\|$ and ${\xvh}=g({\uvh}).$ Since $\xvh^{t+1}=\argmin_{u^k\in\Uc^k}\|\sv^{t+1}-g(\uv)\|$, 
\[
\|\sv^{t+1}-\xvh^{t+1}\|\leq \|\sv^{t+1}-{\xvh}\|.
\]
But $\|\sv^{t+1}-\xvh^{t+1}\|^2=\|\sv^{t+1}-\xvh+\xvh-\xvh^{t+1}\|^2=\|\sv^{t+1}-\xvh\|^2+\|\xvh-\xvh^{t+1}\|^2+2\langle\sv^{t+1}-\xvh,\xvh-\xvh^{t+1} \rangle $. Therefore, 
\begin{align}
\|\xvh-\xvh^{t+1}\|^2\leq& 2\langle\xvh-\sv^{t+1},\xvh-\xvh^{t+1} \rangle\nonumber\\
= &2\langle\xvh-\xvh^{t},\xvh-\xvh^{t+1}\rangle-2 \mu\langle A({\xvh}-\xvh^{t}),A(\xvh-\xvh^{t+1})\rangle\nonumber\\
&-2 \mu\langle A(\xv-{\xvt}),A(\xvh-\xvh^{t+1})\rangle-\mu \langle A^T\zv, \xvh-\xvh^{t+1}\rangle.\label{eq:step-1}
\end{align}
For $t=1,2,\ldots$, define a normalized error vector as follows
\begin{align}
{\ev}^{t}={\xvh-\xvh^{t}\over \|\xvh-\xvh^{t}\|}.\label{eq:def-ev-t}
\end{align}
Using this definition, the triangle inequality and the Cauchy-Schwartz inequality, we rewrite \eqref{eq:step-1} as follows
\begin{align}
\|\xvh-\xvh^{t+1}\|\leq& 2\left(\langle{\ev}^{t+1},{\ev}^{t} \rangle-\mu \langle A{\ev}^{t+1},A{\ev}^{t} \rangle \right) \|\xvh-\xvh^{t}\|+2\mu\|A(\xv-{\xvh})\| \|A\ev^{t+1}\|\nonumber\\
&+2\mu \left|\langle A^T\zv, \ev^{t+1}\rangle\right|\nonumber\\
\leq& 2\left(\langle{\ev}^{t+1},{\ev}^{t} \rangle-\mu \langle A{\ev}^{t+1},A{\ev}^{t} \rangle \right) \|\xvh-\xvh^{t}\|+2\mu(\sigma_{\max}(A))^2\|\xv-{\xvh}\|\nonumber\\
&+2\mu \left|\langle A^T\zv, \ev^{t+1}\rangle\right|.\label{eq:step-2}
\end{align}
 To prove the desired result that connects the error at iteration $t+1$, $\|\xvh-\xvh^{t+1}\|$, to the error at iteration $t$, $ \|\xvh-\xvh^{t}\|$,  we first define the quantized versions of the error  and the reconstruction  vectors. The reason for this discretization becomes clear later when we use them to prove our concentration results.
 
For $t=1,2,\ldots$, define $\uv_b^{t}\triangleq [\uv^{t}]_b$ and 
\[
\xvh_b^{t}\triangleq g(\uv_b^{t}).
\] Also, let 
\[
\mvec{\eta}^t\triangleq \xvh^{t}-\xvh_b^{t}. 
\]
Assume that the quantization level $b$ is selected as follows
\begin{align}
b =  \left\lceil(1+\alpha)\log {1\over \delta}\right\rceil.\label{eq:def-b}
\end{align}

Since by assumption $g$ is a Lipschitz function, we have 
\begin{align}
\|\mvec{\eta}^t\|&=\|g(\uv^t)-g(\uv_b^t)\|\leq L\|\uv^t-\uv_b^t\|\leq L2^{-b}\sqrt{k}\nonumber\\
&\leq  L\delta^{1+\alpha}\sqrt{k}\label{eq:def-eta-k}
\end{align}
where the last line follow from \eqref{eq:def-b}.
Let 
\[
\ev_b^t\triangleq {\xvh-\xvh_b^{t}\over \|\xvh-\xvh_b^{t}\|}.
\]
We next bound $\|\ev^t-\ev^t_b\|$, the distance between  $\ev_b^t$ and $\ev^t$, where $\ev^t$ is defined in \eqref{eq:def-ev-t}. Note that 
\begin{align}
{\ev}^{t}&={\xvh-\xvh^{t}\over \|\xvh-\xvh^{t}\|}={\xvh-\xvh_b^{t}-\mvec{\eta}^t \over \|\xvh-\xvh_b^{t}-\mvec{\eta}^t\|}\nonumber\\
&=\ev_b^t-{\xvh-\xvh_b^{t}\over \|\xvh-\xvh_b^{t}\|}+{\xvh-\xvh_b^{t}-\mvec{\eta}^t \over \|\xvh-\xvh_b^{t}-\mvec{\eta}^t\|}.
\end{align}
Therefore, by the triangle inequality, it follows that 
\begin{align}
\|\ev^t-\ev_b^t\|& \leq {\left| \|\xvh-\xvh_b^{t}\|- \|\xvh-\xvh_b^{t}-\mvec{\eta}^t\|\right|\over  \|\xvh-\xvh_b^{t}-\mvec{\eta}^t\|}+{\|\mvec{\eta}^t\| \over \|\xvh-\xvh_b^{t}-\mvec{\eta}^t\|}\nonumber\\
&\leq {2\|\mvec{\eta}^t\| \over \|\xvh-\xvh_b^{t}-\mvec{\eta}^t\|}\nonumber\\
&= {2\|\mvec{\eta}^t\| \over \|\xvh-\xvh^{t}\|}\nonumber\\
&\stackrel{\rm (a)}{\leq} {2L2^{-b}\sqrt{k} \over \|\xvh-\xvh^{t}\|}\nonumber\\
&\stackrel{\rm (b)}{\leq}{ 2L\delta^{1+\alpha}\sqrt{k} \over \sqrt{n} \delta}=2L\delta^{\alpha}\sqrt{k\over n},\label{eq:dif-evt-evbt}
\end{align} 
where $\rm (a)$ and $\rm (b)$ follow from \eqref{eq:def-eta-k} and our assumption that $\|\xvh-\xvh^{t}\|\geq \sqrt{n}\delta$.

Using the introduced quantizations, in the following, we bound the three terms on the RHS of \eqref{eq:step-2}. 
\begin{itemize}
\item  $2\left(\langle{\ev}^{t+1},{\ev}^{t} \rangle-\mu \langle A{\ev}^{t+1},A{\ev}^{t} \rangle \right) \|\xvh-\xvh^{t}\|$: 
First, note that 
\begin{align}
\langle{\ev}^{t+1},{\ev}^{t} \rangle-\mu \langle A{\ev}^{t+1},A{\ev}^{t} \rangle &= \langle{\ev}^{t+1}-\ev^{t+1}_b+\ev^{t+1}_b,{\ev}^{t}-\ev^{t}_b+\ev^{t}_b \rangle\nonumber\\
&\;\;\;-\mu \langle A({\ev}^{t+1}-\ev^{t+1}_b+\ev^{t+1}_b),A({\ev}^{t}-\ev^{t}_b+\ev^{t}_b) \rangle\nonumber\\
&=\langle{\ev}_b^{t+1},{\ev}_b^{t} \rangle-\mu \langle A{\ev}_b^{t+1},A{\ev}_b^{t} \rangle\nonumber\\
&\;\;\;+\langle{\ev}^{t+1}-\ev^{t+1}_b,{\ev}^{t}-\ev^{t}_b \rangle-\mu \langle A({\ev}^{t+1}-\ev^{t+1}_b),A({\ev}^{t}-\ev^{t}_b) \rangle.\label{eq:et-etp1-etb-etp1}
\end{align}
Therefore, applying the Cauchy-Schwarz inequality and the triangle inequality, it follows that
\begin{align}
&\left|\left(\langle{\ev}^{t+1},{\ev}^{t} \rangle-\mu \langle A{\ev}^{t+1},A{\ev}^{t} \rangle\right) -\left(\langle{\ev}_b^{t+1},{\ev}_b^{t} \rangle-\mu \langle A{\ev}_b^{t+1},A{\ev}_b^{t} \rangle\right)\right|\nonumber\\
&\;\;\;\leq|\langle{\ev}^{t+1}-\ev^{t+1}_b,{\ev}^{t}-\ev^{t}_b \rangle|+\mu|\langle A({\ev}^{t+1}-\ev^{t+1}_b),A({\ev}^{t}-\ev^{t}_b) \rangle|\nonumber\\
&\;\;\;\leq (1+\mu (\sigma_{\max}(A))^2) \|{\ev}^{t+1}-\ev^{t+1}_b\|\|{\ev}^{t}-\ev^{t}_b \|.
\end{align}
Define event $\Ec_1$ as
\[
\Ec_1\triangleq \{\sigma_{\max}(A) \le 2\sqrt{m}+\sqrt{n}\}.
\]
As mentioned earlier, 
\[
P(\Ec_1^c)\leq {\rm e}^{-\frac{m}{2}}.
\]
 Hence, conditioned on $\Ec_1$, 
\begin{align}
&\left|\left(\langle{\ev}^{t+1},{\ev}^{t} \rangle-\mu \langle A{\ev}^{t+1},A{\ev}^{t} \rangle\right) -\left(\langle{\ev}_b^{t+1},{\ev}_b^{t} \rangle-\mu \langle A{\ev}_b^{t+1},A{\ev}_b^{t} \rangle\right)\right|\nonumber\\
&\;\;\;\leq \Big(1+ \mu m \left(\sqrt{n\over m}+2\right)^2\Big) \|{\ev}^{t+1}-\ev^{t+1}_b\|\|{\ev}^{t}-\ev^{t}_b \|\nonumber\\
&\;\;\;\leq {4k\over n} \Big(1+ \left(\sqrt{n\over m}+2\right)^2\Big) L^2\delta^{2\alpha},\label{eq:dif-inner-prod-quantized}
\end{align}
where the last line follows from \eqref{eq:dif-evt-evbt} and because $\mu={1\over m}$. 

Next, we bound the quantized term $\langle{\ev}_b^{t+1},{\ev}_b^{t} \rangle-\mu \langle A{\ev}_b^{t+1},A{\ev}_b^{t}\rangle$. To do this, define the set of normalized error vectors as 
\begin{align}
\Fc_b \triangleq \left\{{\xvh-g(\uv)\over \|\xvh-g(\uv)\|}:\;\uv\in\Uc_b^k \right\}.
\end{align}
Clearly, $|\Fc_b|\leq |\Uc_b|^k$. Define event $\Ec_1$ as  
\begin{align}
\Ec_2 \triangleq \left\{ \langle \ev_b,\ev_b'\rangle-{1\over m}\langle A\ev_b,A\ev_b'\rangle \leq 0.45: \forall \;(\ev_b,\ev_b')\in\Fc_b^2 \right\}.
\end{align}
Applying Lemma \ref{lemma:u-v-As-Av-concent}, and the union bound, it follows that
\begin{align}
\P(\Ec_2^c)&\leq |\Uc_b|^{2k}2^{-0.05m}\nonumber\\
&{\leq}  2^{2bk-0.05m}\nonumber\\
&\stackrel{\rm (a)}{\leq}  2^{2k(1+\alpha)\log{1\over \delta} -0.05m}\nonumber\\
&\stackrel{\rm (b)}{\leq} 2^{-2k\upsilon \log{1\over \delta} },
\end{align}
where $\rm (a)$ and $\rm (b)$ hold because $b$, defined in \eqref{eq:def-b}, is smaller than  $\alpha\log {1\over \delta}+1$ and $m$ is greater than  $k40(1+\alpha+\upsilon)\log {1\over \delta} $ by assumption, respectively. 
Finally, conditioned on $\Ec_1\cap\Ec_2$, combining \eqref{eq:et-etp1-etb-etp1} and \eqref{eq:dif-inner-prod-quantized}, it follows that 
\begin{align}
2\left(\langle{\ev}^{t+1},{\ev}^{t} \rangle-\mu \langle A{\ev}^{t+1},A{\ev}^{t} \rangle \right) \|\xvh-\xvh^{t}\|\leq (0.9+\eta) \|\xvh-\xvh^{t}\|,\label{eq:1st-term}
\end{align} 
where $\eta$ is defined in \eqref{eq:def-eta}.

\item $2\mu(\sigma_{\max}(A))^2\|\xv-{\xvh}\|$: 
Note that
\begin{align}
\|\xvh-\xv\| &=\min_{\uv\in\Uc_b^k} \|g(\uv)-\xv\| \leq  \|g([\uvt]_b)-\xv\| \nonumber\\
&= \|g([\uvt]_b)-g(\uvt)+g(\uvt)-\xv\|\nonumber\\
&\leq \|g([\uvt]_b)-g(\uvt)\|+\|g(\uvt)-\xv\|\nonumber\\
&\leq L\|[\uvt]_b-\uvt\|+\sqrt{n}\delta\nonumber\\
&\leq L\sqrt{k}2^{-b}+\sqrt{n}\delta. \label{eq:error-xvh-xv}
\end{align}
Therefore, using \eqref{eq:error-xvh-xv}, conditioned on $\Ec_2$, we have 
\begin{align}
2\mu(\sigma_{\max}(A))^2\|\xv-{\xvh}\|&\leq \left(2+\sqrt{n\over m}\;\right)^2\left( L\sqrt{k}2^{-b}+\sqrt{n}\delta\right)\nonumber\\
&\leq \left(2+\sqrt{n\over m}\;\right)^2\left( L\delta^{\alpha}\sqrt{k\over n}+1\right)\sqrt{n}\delta\nonumber\\
&= \gamma_1 \delta \sqrt{n},\label{eq:2nd-term}
\end{align} 
where $\gamma_1$ is defined \eqref{eq:def-gamma1-thm2}.
\item $2\mu \left|\langle A^T\zv, \ev^{t+1}\rangle\right|$:
First, note that $\langle A^T\zv, \ev^{t+1}\rangle=\langle \zv,A \ev^{t+1}\rangle$, and
 \begin{align}
 |\langle A^T\zv, \ev^{t+1}\rangle|&=|\langle \zv,A \ev^{t+1}\rangle|=|\langle \zv,A (\ev^{t+1}-\ev_b^{t+1}+\ev_b^{t+1})\rangle|\nonumber\\
 &\stackrel{\rm (a)}{\leq} |\langle \zv,A\ev_b^{t+1}\rangle|+|\langle \zv,A (\ev^{t+1}-\ev_b^{t+1})\rangle|\nonumber\\
  &\stackrel{\rm (b)}{\leq} |\langle \zv,A\ev_b^{t+1}\rangle|+\sigma_{\max}(A) \|\zv\| \|\ev^{t+1}-\ev_b^{t+1}\|\nonumber\\
    &\stackrel{\rm (c)}{\leq} |\langle \zv,A\ev_b^{t+1}\rangle|+\sigma_{\max}(A) \|\zv\| L\delta^{\alpha}\sqrt{k\over n},\label{eq:3rd-term}
 \end{align}
 where $\rm (a)$, $\rm (b)$ and $\rm (c)$ follow from the triangle inequality, the Cauchy-Schwarz inequality and \eqref{eq:dif-evt-evbt}, respectively.  Next, to bound $ |\langle \zv,A\ev_b^{t+1}\rangle|$, we employ Lemma \ref{lemma:gaussian-vectors}. For $\tau>0$ and $\tau_z>0$, define events $\Ec_3$ and $\Ec_4$ as 
\begin{align}
\Ec_3\triangleq \{|\langle \zv,A \ev_b  \rangle|\leq \sigma\sqrt{(1+\tau)m}: \ev_b\in\Fc_b\},
\end{align}
and
 \begin{align}
\Ec_4\triangleq \{\|\zv\| \leq \sigma\sqrt{m(1+\tau_z)}\;\},
\end{align}
respectively. By the law of total probability,
\begin{align}
\P(\Ec_3^c)&=\P(\Ec_3^c\cap \Ec_4)+\P(\Ec_3^c\cap \Ec_4^c)\nonumber\\
&\leq \P(\Ec_3^c\cap \Ec_4)+\P( \Ec_4^c).\label{eq:P-Ec3-total-prob}
\end{align}
 For a fixed $\ev_b\in\Fc_b$, $A\ev_b$ is  i.i.d.~$\Nc(0,1)$ and independent of $\zv$. Therefore, by Lemma \ref{lemma:gaussian-vectors}, $\langle \zv,A \ev_b  \rangle$ has the same distribution as $\|\zv\| G_{\ev_b}$, where $G_{\ev_b}$ is independent of $\zv$ and is distributed as $\Nc(0,1)$.    Hence, for a fixed $\ev_b$, 
 \begin{align}
 \P( \langle \zv,A\ev_b^{t+1}\rangle\geq \sigma\sqrt{(1+\tau)m},\Ec_4)&=  \P\left( G_{\ev_b}\|\zv\| \geq  \sigma\sqrt{(1+\tau)m }, \Ec_4\;\right)\nonumber\\
&\leq   \P\left( G_{\ev_b} \geq \sqrt{1+\tau\over 1+\tau_z},\Ec_4^c\;\right)\nonumber\\
&\leq   \P\left( G_{\ev_b} \geq \sqrt{1+\tau\over 1+\tau_z}\;\right)\nonumber\\
&\leq \ex^{-{1+\tau\over 2(1+\tau_z)}},
 \end{align}
 where the last line holds because for $G\sim\Nc(0,1)$ and $\tau>0$,  $\P(G>\tau)\leq \ex^{-\tau^2/2}$.
 Therefore, applying the union bound, it follows that
  \begin{align}
 \P(\Ec_3^c\cap \Ec_4)&\leq  2^{2kb} \ex^{-{1+\tau\over 2(1+\tau_z)}}\nonumber\\
 &\leq  2^{2k(1+(1+\alpha) \log{1\over \delta})} \ex^{-{1+\tau\over 2(1+\tau_z)}}.\label{eq:P-Ec3c-cap-Ec4}
 \end{align}
 Also, by Lemma \ref{chi-squared},
 \begin{align*}
 \P(\Ec_4^c)\leq \ex^{- \frac{m}{2}(\tau_z - \ln(1 + \tau_z))}.
 \end{align*}
 Let $\tau_z=1$. Then, $\tau_z - \ln(1 + \tau_z)>0.3$ and 
 \begin{align}
 \P(\Ec_4^c)\leq \ex^{- 0.15m}.
 \end{align}
 Choosing 
 \[
 \tau=-1+6(1+\alpha)\left(\log{1\over \delta}\right)k,
 \]
the exponent  of the RHS of \eqref{eq:P-Ec3c-cap-Ec4} can be bounded  as follows
 \begin{align}
 {2(\ln2)k(1+(1+\alpha) \log{1\over \delta})}-{1+\tau\over 2(1+\tau_z)}&= {2(\ln2)k(1+(1+\alpha) \log{1\over \delta})}-1.5(1+\alpha)\left(\log{1\over \delta}\right)k\nonumber\\
 &\leq -0.1 (1+\alpha)\left(\log{1\over \delta}\right)k+2(\ln 2)k.
 \end{align}
Therefore,
 \begin{align}
 \P(\Ec_3^c\cap \Ec_4)&\leq   \ex^{-0.1 (1+\alpha)\left(\log{1\over \delta}\right)k+2(\ln 2)k}.\label{eq:P-Ec3c-cap-Ec4-final}
 \end{align}
 Moreover, for this choice of $\tau$, conditioned on $\Ec_3$,
\begin{align}
\mu \langle \zv,A\ev_b^{t+1}\rangle\leq \sigma\sqrt{1+\tau\over m}= \sigma\sqrt{6(1+\alpha)\left(\log{1\over \delta}\right)k\over m}.
\end{align} 
Also, conditioned on $\Ec_1\cap \Ec_3\cap \Ec_3$, 
\begin{align}
{\mu}\sigma_{\max}(A) \|\zv\| L\delta^{\alpha}\sqrt{k\over n} &\leq {2\sqrt{m}+\sqrt{n}\over  m} \sigma \sqrt{2m} L\delta^{\alpha} \sqrt{k\over n}\nonumber\\
&=\sigma \sqrt{2k\over n}\left(2+\sqrt{n\over m}\;\right)L\delta^{\alpha}\nonumber\\
&=\gamma_2\sigma L\delta^{\alpha},
\end{align} 
where $\gamma_2$ is defined in \eqref{eq:def-gamma2-thm2}. Hence, in summary, conditioned on $\Ec_1\cap \Ec_3\cap \Ec_3$,
\begin{align}
{2\mu \over \sigma }\left|\langle A^T\zv, \ev^{t+1}\rangle\right|\leq \sqrt{6(1+\alpha)\left(\log{1\over \delta}\right)k\over m}+ {\gamma_2  L\delta^{\alpha}} \label{eq:3rd-term-final}
\end{align}

\end{itemize}

Having the bounds on the three terms, combining \eqref{eq:1st-term}, \eqref{eq:2nd-term} and \eqref{eq:3rd-term-final}, conditioned on $\Ec_1\cap\Ec_3\cap\Ec_3\cap\Ec_4$, the desired result  follows from dividing both sides of  \eqref{eq:step-2} by ${1\over \sqrt{n}}$.

\section{Conclusions}\label{sec:conclusion}

In this paper, we have theoretically  studied the performance of an idealized CS recovery method that employs  exhaustive search over all the outputs of a GF corresponding to our  desired class of signals $\Qc$. In the asymptotic regime, where  $n$ (the ambient dimension of set $\Qc$) grows without bound, having a family of GFs with input dimension $k=k_n$ and representation error $\delta=\delta_n$ converging to zero, we have shown that, roughly, $k$ measurements are sufficient  for almost lossless  recovery. We have also studied the performance of an efficient algorithm  based on PGD that employs an AE at each iteration to project the updated signal onto the set of desired signals.   We refer to this method as AE-PGD and prove that given enough measurements, the algorithm converges to the vicinity of the optimal solution even in the presence of  additive white Gaussian noise. We have provided simulation results that highlight both the power and the potential weaknesses of  such recovery methods  based on GFs.

\bibliographystyle{unsrt}
\bibliography{myrefs.bib,refs_list.bib}

\end{document}